\newcommand{\ds}{\displaystyle}
\newcommand{\E}{\text{E}}
\newcommand{\Var}{\text{Var}}
\newcommand{\X}{\mathcal{X}}
\newcommand\numberthis{\addtocounter{equation}{1}\tag{\theequation}}
\newtheorem{theorem}{Theorem}
\newtheorem{proposition}{Proposition}
\newtheorem{lemma}{Lemma}
\newtheorem{corollary}{Corollary}
\theoremstyle{remark}
\newtheorem{remark}{Remark}
\newtheorem{assumption}{Assumption}
\author{
Dootika Vats\\
Department of Mathematics and Statistics\\
Indian Institute of Technology Kanpur\\
Kanpur, India 208016\\
\texttt{dootika@iitk.ac.in}
\and
James M. Flegal \\
Department of Statistics\\
University of California\\
Riverside, CA 92521\\
\texttt{jflegal@ucr.edu}
}
\title{Lugsail lag windows for estimating time-average covariance matrices}
\date{\today}
\begin{document}

\maketitle
 
\onehalfspacing
\begin{abstract}
Lag windows are commonly used in time series, econometrics, steady-state simulation, and Markov chain Monte Carlo to estimate time-average covariance matrices. In the presence of positive correlation of the underlying process, estimators of this matrix almost always exhibit significant negative bias, leading to undesirable finite-sample properties.   We propose a new family of lag windows specifically designed to improve finite-sample performance by offsetting this negative bias. Any existing lag window can be adapted into a lugsail equivalent with no additional assumptions. We use these lag windows within spectral variance estimators and demonstrate its advantages in a linear regression model with autocorrelated and heteroskedastic residuals. We further employ the lugsail lag windows in weighted batch means estimators due to their computational efficiency on large simulation output.  We obtain bias and variance results for these multivariate estimators  and significantly weaken the mixing condition on the process. Superior finite-sample properties are illustrated in a vector autoregressive process and a Bayesian logistic regression model. 
\end{abstract}
\section{Introduction} 
\label{sec:introduction}

Variance of estimators in correlated data problems often take the form $\Sigma: = \sum_{s=-\infty}^{\infty} R(s)$, where $R(s)$ is a lag-$s$ covariance matrix. In time series, $\Sigma$ occurs in the estimation of spectra and long run variance \citep{hannan:1970,priest:1981} while in econometrics it occurs in heteroskedastic and autocorrelation consistent (HAC) covariance matrix estimation \citep{andr:1991, newey:west:1987}.  In steady-state simulation and Markov chain Monte Carlo (MCMC), $\Sigma$ is the limiting covariance of Monte Carlo estimators, sometimes referred to as the time-average covariance matrix \citep{glyn:whit:1992, chan:yau:2017}. 

Estimators of $\Sigma$ often downweight the sample lag covariances through a lag window (or kernel function). Such estimators suffer from two sources of bias \citep[see e.g.][]{dehaan:1997}.  A first-order bias term originates from the choice of lag window and is typically $O(b^{-q})$ for some $q \geq 1$ and tuning factor $b$.  A $o(b^{-q})$ second-order bias term is a consequence of finite sampling.  Both terms are typically negative, inducing significant downward bias in the estimation of $\Sigma$. In the univariate case, \citet[Table 1]{chan:2017} summarize the bias of various estimators of $\Sigma$. Of the 17 estimators considered, 15 exhibit negative first-order bias under positive correlation and 2 have a first-order bias of zero. 
Offsetting this negative bias is  imperative, especially in the presence of high positive correlation, a scenario common in steady-state simulation and MCMC.  Here, estimators of $\Sigma$ are critical to determining stopping time of the simulation \citep{glyn:whit:1992} as negatively biased estimators lead to premature simulation termination  and under-coverage of confidence regions. 

We propose a novel and flexible class of lag windows that can offset both the first and second-order bias, while preserving asymptotic unbiasedness and consistency. We call these, lugsail lag windows due to visual similarity with a lugsail (a fore-and-aft, four-cornered sail that is suspended from a spar or yard). The distinguishing feature of lugsail lag windows that allows this offsetting, is that they can take values above 1.  We are unaware of any other lag window with this property and in fact, \cite{berg:pol:2009} claim that ``there is no benefit to allowing the window to have values larger than 1''. 

All commonly encountered lag windows can be easily transformed into a corresponding lugsail lag window with a zero (or even positive) first-order bias. Figure~\ref{fig:all_lags} illustrates this flexibility for three popular lag windows.  The lugsail lag windows can be tuned based on the correlation (or persistence) of the underlying process.  We focus on positive correlation and note the proposed settings may be counter productive in anti-persistent applications.  We quantify the correlation as being \textit{moderate} when it is similar to an AR(1) process with $\rho \in (0,0.7)$.  For moderate correlation, common in the analysis of time series spectra and HAC estimation \citep{laz:lewis:stock}, we recommend the zero lugsail (solid blue line in Figure~\ref{fig:all_lags}) where the first-order bias is zero. 
\begin{figure}[htbp]
	\centering
	\includegraphics[width = 2in]{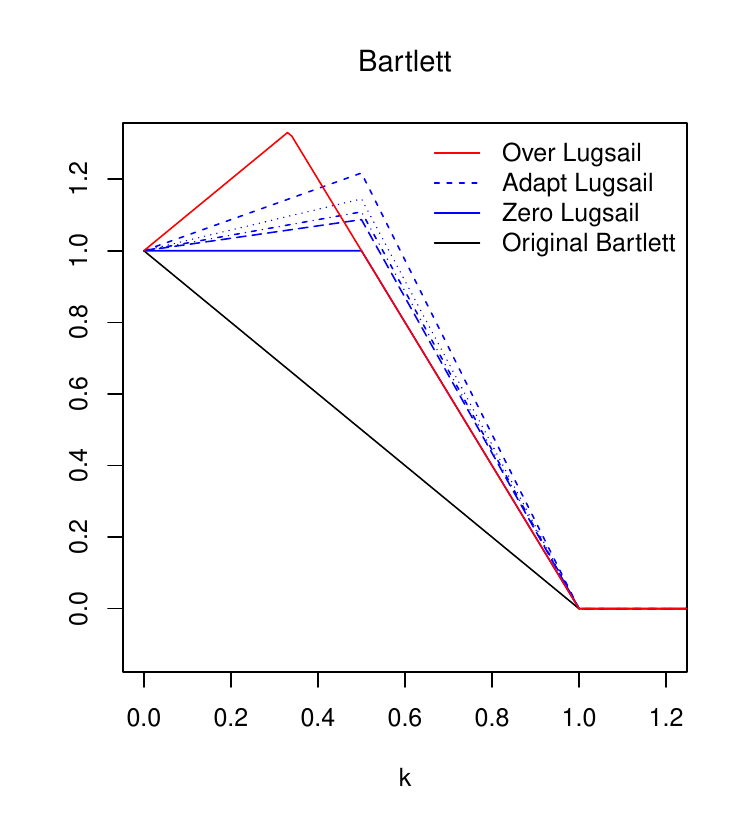} \hspace{-.5cm}
	\includegraphics[width = 2in]{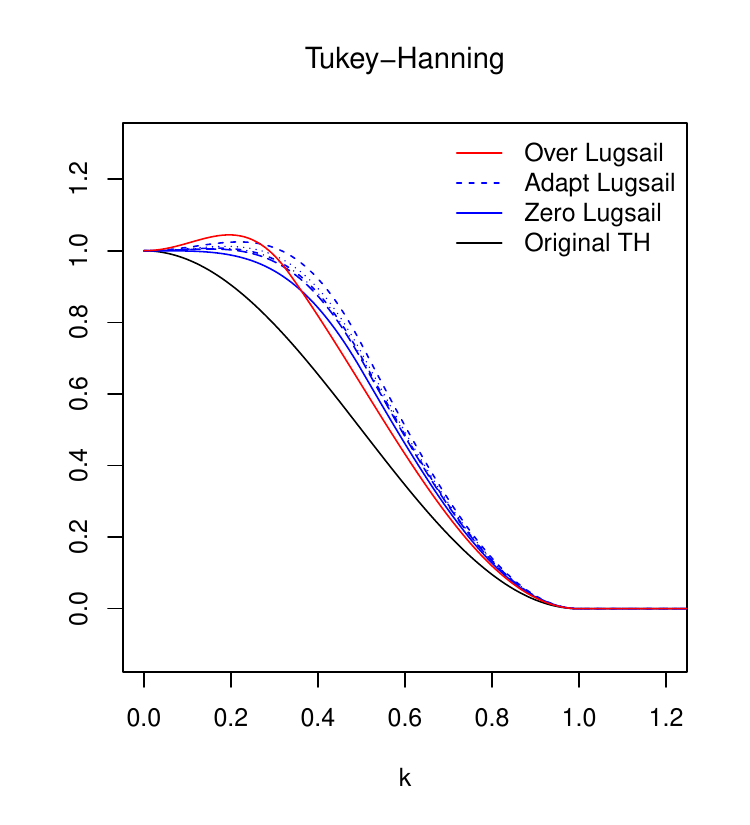}\hspace{-.5cm}
	\includegraphics[width = 2in]{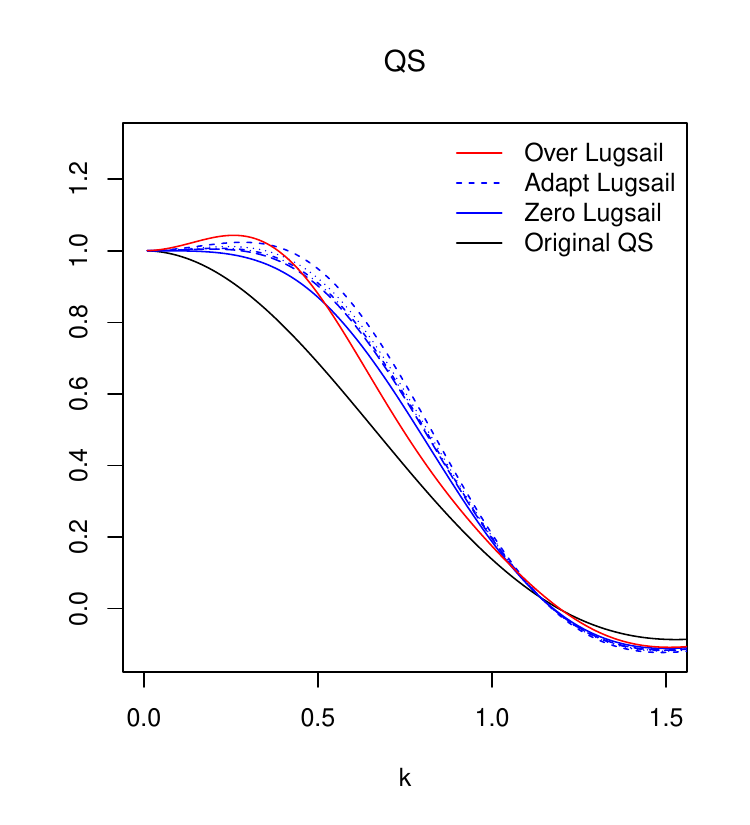}
	\caption{Lugsail versions of Bartlett, Tukey-Hanning, and quadratic spectral lag windows.  Original lag windows are black, zero lugsail lag windows with a first-order bias of zero are blue (adapted versions are dashed blue), and over biased lugsail lag windows are red.}
	\label{fig:all_lags}
\end{figure}
Higher correlation requires more aggressive settings to offset some (or all) of the second-order bias.  For \textit{high} correlations similar to an AR(1) with $\rho \in (0.7, 0.95)$, we recommend using adapt lugsail lag windows which converge to the zero lugsail as data increases. These adapt lugsail windows can also be used in the presence of moderate correlation.  \textit{Extreme} correlation similar to an AR(1) with $\rho \in (0.95, 1)$ is common in MCMC and steady state simulations.  Here, the second-order bias remains significant in finite sampling even for large sample sizes.  In extreme correlation settings, we illustrate the utility of over lugsail lag windows with a positive $O(b^{-1})$ first-order bias, which are appealing when simulation allows for additional data.

We first use lugsail lag windows in spectral variance (SV) estimators, which are the default choice in the analysis of time series spectra and HAC estimation. Univariate and multivariate SV estimators have been discussed in \cite{ande:1971}, \cite{andr:1991}, \cite{hannan:1970}, \cite{newey:west:1987}, \cite{parzen:1957}, \cite{white:1980}, and others in the context of stationary time series, ordinary least squares, generalized method of moments, and instrumental variables. We illustrate, in an HAC example, how lugsail lag windows safeguard against oversized tests. For application in HAC estimation, where data is limited and correlation is moderate, we recommend the zero lugsail lag windows.

   Efforts in improving the finite-sample properties of SV estimators are continual. These include the introduction of new lag windows \citep{kief:vogel:2002a,phill:sun:2006} and  tuning $b$ \citep{kief:vogel:2002b,kief:vogel:2005,sun:2008,wilh:2015}. These recommendations can easily be used in conjunction with lugsail lag windows.  

SV estimators are agonizingly slow for large data and are rarely used in steady-state simulation and MCMC where large simulation lengths are standard. Other conservative estimators of $\Sigma$  proposed by \cite{dai:jone:2017}, \cite{geye:1992}, and \cite{koso:2000} require time-reversibility, retain significant asymptotic bias by design, and are also computationally intensive. 

Weighted batch means (BM) estimators \citep{liu:fleg:2018} provide a fast alternative when used with piece-wise linear lag windows.  We focus on piece-wise linear lugsail lag windows that  yield computationally efficient weighted BM estimators (left plot in Figure~\ref{fig:all_lags}).  As part of our study of bias and variance of weighted BM estimators, we significantly weaken the sufficient conditions.  Previous bias results assume the underlying process to be $\phi$-mixing and require 12th order moments \citep{chie:gold:mela:1997,song:schm:1995}. We require  only $\alpha$-mixing and $4$th order moments. This, for example, allows our results to be applicable to polynomially ergodic Markov chains, rather than only uniformly ergodic Markov chains. 

For a fixed $b$, lugsail estimators may increase asymptotic mean-squared-error compared to the original counterpart.  However, mean-squared-error is an ineffective quality measure for variance estimators as bluntly argued by \cite{simon:1993}, ``it does not address the relative importance of bias and variability, and the differing effects of negative bias and positive bias, on test size and confidence interval''. Essentially, when the goal is to construct confidence regions or do hypothesis tests, it is beneficial to correct for negative bias at the cost of variability. This is evident in Section~\ref{sec:examples}, where lugsail estimators dramatically improve coverage probabilities of confidence regions, estimates of effective sample size, and estimation accuracy of $\Sigma$ in time series and Bayesian logistic regression examples.

\section{Lugsail lag windows} 
\label{sec:lugsail_window}

Let $\{Y_t\}$ be a $p$-dimensional covariance stationary stochastic process with mean $\mu$ and autocovariance $R(s) = \E[(Y_t - \mu)(Y_{t+s} - \mu)^T]$. 
Recall we are interested in estimating the time-average covariance matrix $\Sigma = \sum_{s=-\infty}^{\infty} R(s)$ using the observed process. If $\mu$ is unknown, it is estimated by $\bar{Y} = n^{-1} \sum_t Y_t$ and the lag-$s$ sample covariance matrix is,
\begin{equation}
\label{eq:sample_lag_covariance}
\hat{R}(s) = \dfrac{1}{n} \ds \sum_{t=1}^{n-s}\left(Y_t - \bar{Y} \right)\left(Y_{t+s} - \bar{Y} \right)^T\,.
\end{equation}
If $\mu$ is known, as in \cite{ande:1971}, \cite{hannan:1970}, and \cite{priest:1981}, $\mu$ replaces $\bar{Y}$ in \eqref{eq:sample_lag_covariance}. Estimators of $\Sigma$ weight the sample lag covariances using a lag window function, $k: \mathbb{R} \to \mathbb{R}$  such that $k(0) = 1, k(x) = k(-x)$ for all $x \in \mathbb{R}$. Three common lag windows are
\begin{align*}
\text{Bartlett: } \quad k(x) = & \begin{cases}
	\left(1 - |x|\right), & |x| \leq 1 \\  0, & |x| \geq 1
\end{cases}\\ 
\text{Tukey-Hanning (TH): } \quad k(x) = & \begin{cases}
	\dfrac{1}{2} + \dfrac{1}{2} \cos(\pi x), & |x| \leq 1 \\  0, & |x| \geq 1
\end{cases}\\
\text{Quadratic Spectral (QS): } \quad  k(x) = & \dfrac{25 }{12 \pi^2 x^2} \left(\dfrac{\sin \left(6 \pi x /5 \right) }{6 \pi x /5}  - \cos\left( 6 \pi x /5 \right)\right) \;.
\end{align*}

Figure~\ref{fig:all_lags} plots the lag windows with solid black lines.
All three lag windows are decreasing which leads to downward biased estimation of $\Sigma$.  Lugsail adjustments of these lag windows intentionally lift them over 1 to correct for this. For $r \geq 1$ and a sequence $ c_n\in [0,1)$ such that $c_n \to c$ as $n \to \infty$, define a family of lugsail windows of any existing lag window to be
\begin{equation}
\label{eq:lugsail}
k_L(x) = \dfrac{1}{1-c_n} k(x) - \dfrac{c_n}{1-c_n} k(rx)\,.
\end{equation}
Setting $c_n = 0$ or $r = 1$, yields the original lag window and increasing $r$ increases the lift in the lugsail lag window.  Setting $r = 1/c_n$ with the Bartlett lag window gives the flat-top Bartlett lag window \citep{pol:roma:1995,pol:rom:1996}.  Figure~\ref{fig:all_lags} presents lugsail versions of the Bartlett, TH, and QS lag windows, with the zero and over lugsail presented in solid blue and red, respectively.  The adapt lugsail (dashed blue in Figure~\ref{fig:all_lags}) shows a sequence of lugsail lag windows converging to the zero lugsail lag window as $n$ increases.  The simple and novel overweighting of the initial lag covariances using  lugsail lag windows offsets most of the second-order bias in moderate correlation applications. Section~\ref{sec:practical_considerations} provides practical guidance on selecting of $r$ and $c_n$ with formal definitions of zero, over, and adapt lugsail. 

In the following sections, we use lugsail lag windows in SV and weighted BM estimators. The following umbrella assumption on the stochastic process is made to that ensure $\Sigma$ is finite. 
For a stationary stochastic process $S = \{S_{n}\}$ on a probability space $(\Omega, {\mathcal F}, P)$, set ${\mathcal F}_{s}^{l} = \sigma(S_{s}, \ldots, S_{l})$.  Define the $\alpha$-mixing coefficients for $n=1, 2, \ldots$ as
\[ 
\alpha(n) = \sup_{s \ge 1} \sup_{A \in {\mathcal F}_{1}^{s}, \, B \in {\mathcal F}_{s+n}^{\infty}} | P(A \cap B) - P(A) P(B) | \; .
\]
The process is $\alpha$-mixing if $\alpha(n) \to 0$ as $n \to \infty$. Let $\|\cdot\|$ denote Euclidean norm.  
\begin{assumption}
\label{ass:alpha_mixing}
For some $\delta > 0$ and $q \geq 1$, $\E\|Y_1\|^{2+\delta} < \infty$ and there exists $\epsilon > 0$ such that $\{X_t\}$ is $\alpha$-mixing with $\alpha(n) = o\left(n^{-(q + 1+ \epsilon)(1+2/\delta)} \right)$.
\end{assumption}

\section{Spectral variance estimators}
Let $b \in \mathbb{N}$ be a truncation point (bandwidth), then the multivariate SV estimator is
\begin{equation}
\label{eq:sve}
\dot{\Sigma}_{k,b} = \sum_{s= -(n - 1)}^{n-1} k\left( \dfrac{s}{b}\right) \hat{R}(s)\,.
\end{equation}
Let $\dot{\Sigma}_{k,b}$ and $\dot{\Sigma}_{k, b/r}$ be SV estimators with integer truncation points $b$ and $b/r$, respectively.  Using a lugsail lag window in \eqref{eq:lugsail} with the multivariate SV estimator in \eqref{eq:sve} yields 
\begin{equation} \label{eq:sv.linear}
\dot{\Sigma}_{k,L} =  \dfrac{1}{1-c_n}\dot{\Sigma}_{k,b} - \dfrac{c_n}{1-c_n}\dot{\Sigma}_{k, b/r }\, .
\end{equation}
That is, the lugsail SV estimator, $\dot{\Sigma}_{k,L}$, is a linear combination of SV estimators. By \eqref{eq:sv.linear} and since $c_n \to c$ as $n \to \infty$, lugsail SV estimators retain consistency from the original SV estimators. Sufficient conditions for strong consistency can be found in \cite{deJong:2000} for applications in econometrics and in \cite{vats:fleg:jones:2018} for time-average covariance matrix estimation.
%
\begin{theorem} The lugsail SV estimator $\dot{\Sigma}_{k,L}$ inherits (strong) consistency from $\dot{\Sigma}_{k,b}$.
\end{theorem}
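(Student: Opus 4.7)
The plan is to exploit the linear-combination identity in equation \eqref{eq:sv.linear}, which writes
\[
\dot{\Sigma}_{k,L} \;=\; \frac{1}{1-c}\,\dot{\Sigma}_{k,b} \;-\; \frac{c}{1-c}\,\dot{\Sigma}_{k,\lfloor b/r\rfloor},
\]
so that $\dot{\Sigma}_{k,L}$ is a deterministic affine combination of two ordinary SV estimators with the \emph{same} lag window $k$ but different bandwidths. Consistency then reduces to checking that each of these two SV estimators converges to $\Sigma$ and taking the appropriate linear combination of the limits.

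First I would invoke the cited consistency results for $\dot{\Sigma}_{k,b}$ (from \cite{deJong:2000} in the econometric setting and \cite{vats:fleg:jones:2018} in the time-average covariance setting) to conclude that $\dot{\Sigma}_{k,b} \to \Sigma$ (almost surely, if strong consistency is being claimed) as $n \to \infty$, under whatever rate conditions on $b = b(n)$ those results impose (typically $b\to\infty$ with $b/n\to 0$ at a controlled rate). Next I would observe that since $r\geq 1$ is a fixed constant, the sequence $b'(n) := \lfloor b(n)/r\rfloor$ satisfies exactly the same asymptotic rate conditions: $b'(n)\to\infty$, $b'(n)/n\to 0$, and any polynomial relation of the form $b(n)\asymp n^{\beta}$ is inherited with the same exponent. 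Hence the same consistency theorem applies to $\dot{\Sigma}_{k,\lfloor b/r\rfloor}$, giving $\dot{\Sigma}_{k,\lfloor b/r\rfloor}\to \Sigma$ in the same mode.

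Finally, using the componentwise continuous mapping for the deterministic linear combination and the algebraic identity
\[
\frac{1}{1-c}\,\Sigma \;-\; \frac{c}{1-c}\,\Sigma \;=\; \Sigma,
\]
I would conclude $\dot{\Sigma}_{k,L}\to\Sigma$ in the corresponding mode, which gives (strong) consistency of the lugsail SV estimator.

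The only nontrivial step is the bookkeeping in the second paragraph: verifying that the bandwidth conditions required by the invoked consistency theorem are closed under division by the fixed constant $r$. This is routine because $r\geq 1$ is a constant independent of $n$, so any monotone growth or rate constraint on $b(n)$ transfers verbatim to $\lfloor b(n)/r\rfloor$; I do not anticipate a real obstacle here. No moment assumptions beyond those in Assumption~\ref{ass:alpha_mixing} (and the hypotheses of the cited theorems) are needed, since we are only taking linear combinations of already-consistent estimators.
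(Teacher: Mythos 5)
Your argument is correct and is exactly the one the paper intends: the identity in \eqref{eq:sv.linear} expresses $\dot{\Sigma}_{k,L}$ as a fixed affine combination (with coefficients summing to one) of two consistent SV estimators, and the bandwidth conditions transfer to $\lfloor b/r\rfloor$ since $r$ is a fixed constant. The paper offers no further detail beyond this observation, so your write-up, including the bookkeeping on the second bandwidth, matches and slightly elaborates the paper's reasoning.
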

%
%
%
Studying the bias of $\dot{\Sigma}_{k,L}$ requires additional notation.  
%
For $q \geq 1$, a key object is
\[
\Gamma^{(q)} := -\sum_{s=1}^{\infty} s^q\left[R(s) + R(s)^T \right]\,.
\]
Let $\Gamma:= \Gamma^{(1)}$ and denote the $ij$th element of $\Sigma$, $\dot{\Sigma}_{k,b}$, and $\Gamma^{(q)}$ as $\Sigma_{ij}$, $\dot{\Sigma}_{k,b}^{ij}$, and $\Gamma_{ij}^{(q)}$, respectively.

\begin{theorem}
	\label{thm:bias_andrews}
Let $\E\|Y_1\|^{4 + \delta} < \infty$ for some $\delta > 0$ and let $k(x)$ be continuous and uniformly bounded. Further, let Assumption~\ref{ass:alpha_mixing} hold for $q$ such that 
\[
\lim_{x \to 0} \dfrac{1 - k(x)}{|x|^q} = k_q < \infty\,.
\]
If $b^{q+1}/n \to 0$ as $n \to \infty$, then
\[
\text{Bias}(\dot{\Sigma}_{k,b}) =  \dfrac{k_q}{b^q} \Gamma^{(q)} +o\left(\dfrac{1}{b^q} \right)\quad \text{ and}
\]
\[
\dfrac{n}{b} \text{Var}\left(\dot{\Sigma}_{k,b}^{ij} \right) =  \left[\Sigma_{ii}\Sigma_{jj} + \Sigma_{ij}^2 \right] \int_{-\infty}^{\infty} k^2(x) dx + o(1)\,.
\]
\end{theorem}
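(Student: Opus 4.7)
The plan is to adapt the classical spectral estimator analysis (e.g.\ Parzen, Hannan, and Andrews 1991) to the present $\alpha$-mixing setting, treating bias and variance separately. Throughout, the mixing/moment assumption will be used to guarantee $\sum_{s} \|s\|^{q} \|R(s)\| < \infty$ (by Ibragimov-type covariance inequalities under $\alpha$-mixing with $(2+\delta)$ moments, combined with the rate $\alpha(n) = o(n^{-(q+1+\epsilon)(1+2/\delta)})$), so that all interchanges of sums, limits, and expectations are legitimate.

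For the bias, I would first compute $\E[\hat R(s)]$. Expanding $(Y_t - \bar Y)(Y_{t+s}-\bar Y)^T$ and using stationarity gives $\E[\hat R(s)] = \tfrac{n-s}{n} R(s) + \Delta_n(s)$, where $\Delta_n(s)$ collects terms involving $\Var(\bar Y)$ and cross-products; under Assumption~\ref{ass:alpha_mixing} one has $\Var(\bar Y) = O(n^{-1})$, so $\sum_s k(s/b)\|\Delta_n(s)\| = O(b/n) = o(b^{-q})$ using $b^q/n\to 0$. Hence
\[
\E[\dot\Sigma_{k,b}] - \Sigma \;=\; -\sum_{s \neq 0}\!\left(1 - k(s/b)\right)R(s) \;-\; \sum_{s}\frac{|s|}{n}k(s/b)R(s) \;+\; o(b^{-q}).
\]
The second sum is $O(b/n) = o(b^{-q})$. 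For the first, split the sum at a truncation $M_n$ with $b \ll M_n \ll n$: on $|s| \le M_n$ apply the hypothesis $1-k(x) = k_q|x|^q + o(|x|^q)$ to get $b^{-q} k_q \sum_{|s|\le M_n} |s|^q R(s)$, and on $|s|>M_n$ use summability of $|s|^q R(s)$ to bound the tail by $o(b^{-q})$. Since $\sum_{s\ne 0}|s|^q R(s) = -\Gamma^{(q)}$ by the definition given in the text (together with $R(-s) = R(s)^T$), this yields $\text{Bias}(\dot\Sigma_{k,b}) = k_q b^{-q}\Gamma^{(q)} + o(b^{-q})$.

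For the variance, I would expand
\[
\Var(\dot\Sigma_{k,b}^{ij}) \;=\; \sum_{s,t}k(s/b)k(t/b)\,\Cov\!\left(\hat R_{ij}(s),\, \hat R_{ij}(t)\right),
\]
and use the standard representation of the covariance of lag-product estimators in terms of second-order covariances plus a fourth-order cumulant sum: $\Cov(\hat R_{ij}(s), \hat R_{ij}(t)) = n^{-1}\sum_{u}[R_{ii}(u)R_{jj}(u+t-s) + R_{ij}(u+t)R_{ji}(u-s)] + n^{-1}\kappa_n(s,t)$, where $\kappa_n(s,t)$ involves the fourth-order joint cumulants. Under Assumption~\ref{ass:alpha_mixing} with $\E\|Y_1\|^{4+\delta}<\infty$, Rosenthal/Ibragimov-style mixing bounds make $\sum_{s,t}|k(s/b)k(t/b)\kappa_n(s,t)| = o(b)$. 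For the leading term, change variables $s=bx$, $t=by$, recognize the inner sum as converging to $\Sigma_{ii}\Sigma_{jj}$ or $\Sigma_{ij}^2$ (by summability of the autocovariances), and approximate $b^{-1}\sum_s k(s/b)k((s+h)/b) \to \int k^2(x)\,dx$ uniformly in the relevant $h$ via Riemann-sum arguments. Multiplying by $n/b$ gives the stated limit.

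The main obstacle will be handling the fourth-order cumulant term $\kappa_n(s,t)$ under only $\alpha$-mixing and $(2+\delta)$-style moments rather than the stronger mixing usually assumed in Hannan (1970) or Andrews (1991); the prescribed polynomial rate on $\alpha(n)$ is engineered precisely so that the three- and four-variable covariance bounds are summable uniformly in $b$. Once that summability is secured, the bias argument reduces to a routine Taylor expansion controlled by $k$'s behavior at the origin, and the variance argument reduces to a Riemann-sum limit—so the bulk of the technical work is really the cumulant/mixing bookkeeping.
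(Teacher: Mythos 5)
Your overall route---expand $\E[\hat R(s)]$, isolate the $\sum_s(1-k(s/b))R(s)$ term and Taylor-expand $k$ at the origin for the bias, then use the Bartlett-type formula for covariances of sample autocovariances plus a fourth-cumulant remainder and a Riemann-sum limit for the variance---is exactly the classical Parzen/Hannan/Andrews argument. The paper does not write this argument out at all: its proof is a two-line citation to Hannan (1970, p.~280) and Andrews (1991, Lemma 1), with Anderson (1971, Ch.~9) invoked for the passage from known $\mu$ to $\bar Y$. So you are reconstructing the proof the paper delegates rather than taking a different route, and the skeleton is the right one.

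Two concrete soft spots. First, in handling the mean-correction term you assert $\sum_s k(s/b)\|\Delta_n(s)\| = O(b/n) = o(b^{-q})$ ``using $b^q/n\to 0$.'' That implication is false: $(b/n)\,b^{q} = b^{q+1}/n$, which is not controlled by $b^q/n\to 0$ (take $q=1$ and $b=n^{2/3}$: then $b/n = n^{-1/3}$, which is much larger than $b^{-1}=n^{-2/3}$). Your second sum $\sum_s (|s|/n)k(s/b)R(s)$ is actually $O(n^{-1})$ by summability of $|s|^q\|R(s)\|$, and that \emph{is} $o(b^{-q})$ under the stated condition; but the $\Delta_n(s)$ contribution genuinely is of order $b/n$ (its leading part is $-\tfrac{b}{n}\Sigma\int k(x)\,dx$), and making it negligible relative to $b^{-q}$ requires either the stronger rate $b^{q+1}/n\to 0$ or the finer accounting of Anderson's Chapter 9 --- this term is precisely the ``second-order bias from finite sampling'' that motivates the paper, so it cannot be dismissed by a false implication. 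Second, the variance argument hinges on summability of the fourth-order cumulant array uniformly in $b$, which you correctly flag as the main obstacle but then dispose of by asserting that ``Rosenthal/Ibragimov-style bounds make it $o(b)$.'' Under only $\alpha$-mixing with $\E\|Y_1\|^{4+\delta}<\infty$ this is exactly the content of Andrews' Lemma 1, and it needs to be either proved (covariance inequalities applied to the products $Y^{(i)}_tY^{(j)}_{t+s}$, which possess only $2+\delta/2$ moments, together with the prescribed polynomial mixing rate) or explicitly cited; as written it is asserted, not established.
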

\begin{proof}
	The proof follows from \citet[page 280]{hannan:1970} and \citet[Lemma 1]{andr:1991}. \cite{hannan:1970} assume $\mu$ is known, which requires $b^q/n \to \infty$ but a standard argument in \citet[Chapter 9]{ande:1971} shows the result holds for when $\mu$ is replaced by $\bar{Y}$ if we allow $b^{q+1}/n \to \infty$.
\end{proof}

\begin{remark}\label{rem:andrews_HAC}
Under the assumptions of \citet[Theorem 2]{andr:1991}, the conclusion of Theorem~\ref{thm:bias_andrews} holds for HAC matrices; however, \cite{andr:1991} assumes $|k(x)| \leq 1$. A careful study of the proof reveals that $|k(x)| \leq 1$ can be replaced with the assumption $|k(x)| \leq d$ for some $0 < d < \infty$. Lugsail lag windows are bounded by $(1-c_n)^{-1}$.
\end{remark}
The first-order bias term, $k_q \Gamma^{(q)} / b^q$, has negative diagonals for positively correlated processes. Most efforts in reducing the bias have gone into choosing $b$.  For example, \cite{andr:1991} provides optimal choices for $b$, which we use in our simulations.  Whatever the choice of $b$, the resulting first-order bias remains negative leading to oversized tests \citep[see e.g.][]{hart:2018}. Using the lugsail lag window in Theorem~\ref{thm:bias_andrews}, when $r$ and $c_n$ are such that $c_n > 1/r^q$, the resulting first-order bias is positive. 
\begin{corollary}
\label{cor:bias_lugsail_andrews}
Under the conditions of Theorem~\ref{thm:bias_andrews},
\[
\text{Bias}(\dot{\Sigma}_{k,L}) =  \dfrac{1 - c_nr^q}{1 - c_n}\, \dfrac{k_q}{b^q} \Gamma^{(q)} + o\left( \dfrac{1}{b^q}\right)\,, \text{ and }
\]
\[
\dfrac{n}{b} \text{Var}\left(\dot{\Sigma}_{k,L}^{ij} \right) =  \left[\Sigma_{ii}\Sigma_{jj} + \Sigma_{ij}^2 \right] \int_{-\infty}^{\infty} k_L ^2(x) dx + o(1)\,.
\]
\end{corollary}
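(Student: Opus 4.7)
The plan is to apply Theorem~\ref{thm:bias_andrews} directly to the lugsail SV estimator viewed as a single SV estimator with lag window $k_L$, rather than treating it as a linear combination. First I would verify that $k_L$ satisfies the hypotheses of Theorem~\ref{thm:bias_andrews}: continuity and symmetry of $k_L$ follow immediately from those of $k$, and $k_L$ is uniformly bounded by $(1-c)^{-1}\,\sup|k| + c(1-c)^{-1}\sup|k|<\infty$. As noted in Remark~\ref{rem:andrews_HAC}, the assumption in \cite{andr:1991} that $|k(x)|\le 1$ can be relaxed to $|k(x)|\le d$ for some finite $d$, so the conclusions of Theorem~\ref{thm:bias_andrews} apply to $k_L$.

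Next I would identify the constant $k_{q,L}:=\lim_{x\to 0}(1-k_L(x))/|x|^q$ for the lugsail window. Writing
\[
1 - k_L(x) \;=\; \frac{1}{1-c}\bigl[(1-k(x)) \;-\; c\,(1-k(rx))\bigr],
\]
and using $\lim_{x\to 0}(1-k(rx))/|x|^q = r^q k_q$, I obtain $k_{q,L} = (1-cr^q)k_q/(1-c)$. Substituting this into the bias conclusion of Theorem~\ref{thm:bias_andrews} gives exactly $\mathrm{Bias}(\dot{\Sigma}_{k,L}) = \frac{1-cr^q}{1-c}\,\frac{k_q}{b^q}\Gamma^{(q)} + o(b^{-q})$, and substituting $k_L$ in place of $k$ in the variance conclusion yields the stated variance formula.

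As an alternative cross-check for the bias, I would use the linear representation \eqref{eq:sv.linear} together with Theorem~\ref{thm:bias_andrews} applied to $\dot{\Sigma}_{k,b}$ and $\dot{\Sigma}_{k,\lfloor b/r\rfloor}$ separately. By linearity of expectation,
\[
\mathrm{Bias}(\dot{\Sigma}_{k,L}) \;=\; \frac{1}{1-c}\,\frac{k_q}{b^q}\Gamma^{(q)} \;-\; \frac{c}{1-c}\,\frac{k_q}{\lfloor b/r\rfloor^q}\Gamma^{(q)} \;+\; o(b^{-q}),
\]
and absorbing $\lfloor b/r\rfloor^{-q} = r^q b^{-q} + o(b^{-q})$ into the remainder recovers the same coefficient $(1-cr^q)/(1-c)$.

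The only mild obstacle is the variance statement: it is tempting but incorrect to obtain it by naively adding the variances of the two summands in \eqref{eq:sv.linear}, since those two SV estimators are computed from the same sample and hence strongly correlated. Treating $\dot{\Sigma}_{k,L}$ as a single SV estimator with window $k_L$ circumvents this issue, which is why the ``direct application of Theorem~\ref{thm:bias_andrews} to $k_L$'' route is the cleaner one; verifying that the bound-relaxation from Remark~\ref{rem:andrews_HAC} indeed goes through (since Andrews' proof uses $|k|\le 1$ only to majorize certain integrals and partial sums of lag covariances, which is unaffected by replacing $1$ with $d$) is the one place where some care is required.
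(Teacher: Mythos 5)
Your proposal is correct and follows essentially the same route as the paper, which presents the corollary as an immediate consequence of applying Theorem~\ref{thm:bias_andrews} to the lag window $k_L$ (using the boundedness relaxation of Remark~\ref{rem:andrews_HAC}); your computation $k_{q,L}=(1-cr^q)k_q/(1-c)$ is exactly the needed step, and your observation that the variance cannot be obtained by naively summing the variances of the two terms in \eqref{eq:sv.linear} is a correct and worthwhile caveat. The only cosmetic discrepancy is that the single-window view uses bandwidth $b/r$ rather than $\lfloor b/r\rfloor$ in the second term, a difference absorbed into the $o(b^{-q})$ and $o(1)$ remainders.
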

The online supplement provides bias and variance expressions for some choices of $c_n$ and $r$.

SV estimators are prohibitively expensive when $p$ or $n$ are large, such as in MCMC and steady-state simulations. Specifically, the optimal choice of $b$ is proportional to $\lfloor n^{\nu} \rfloor$ for $\nu > 0$, leading to an expensive summation in \eqref{eq:sve}. Simulations exhibiting this behavior are in Section~\ref{sub:time_series_example}. The following section focuses on a computationally efficient lugsail estimator.

\section{Weighted BM estimators}

\subsection{Notation} 
\label{sub:problem_setup}
We first define standard notation in steady-state simulation and MCMC where batched estimators are common. These differ slightly from Section~\ref{sec:lugsail_window}, but the form of $\Sigma$ remains the same.  Let $\{X_t\}$ be an $F$-stationary process defined on a $d$-dimensional space, $\X$. For a function $g: \X \to \mathbb{R}^p$, let $\{Y_t\} = \{g(X_t)\}$. Interest is in quantifying the error in estimating $\mu_g = \int g(x) F(dx)$ with $\bar{Y} = n^{-1} \sum_{t=1}^{n} Y_t$. Particularly, $\Sigma = \sum_{s = -\infty}^{\infty} R(s) = \lim_{n \to \infty} n\Var_F(\bar{Y})$.

We continue to assume the process satisfies Assumption~\ref{ass:alpha_mixing}. Particularly in MCMC, $F$ is typically the target distribution and if the Markov chain is polynomially ergodic of order $\xi > (1+q+\epsilon)(1 + 2/\delta)$, the mixing condition in  Assumption~\ref{ass:alpha_mixing} is satisfied \cite[see][]{jone:2004}.

\subsection{Lugsail weighted BM}

The weighted BM estimator incorporates a lag window in combination with non-overlapping batches. For $s = 1, \dots, b < n$ and lag window $k$, denote $\Delta_2 (s)=k( (s-1)/b) - 2 k(s/b) + k((s+1)/b)$. Let $a_s = \lfloor n/s\rfloor$ and for $l = 0, \dots, a_s - 1$,  define $\bar{Y}_l(s) = s^{-1} \sum_{t=1}^{s} Y_{ls+t}$. The weighted BM estimator is
\begin{equation}
\label{eq:wbm}
	\hat{\Sigma}_{k,b} = \sum_{s=1}^{b} \dfrac{1}{a_s - 1} \sum_{l=0}^{a_s-1} s^2 \Delta_2 (s) (\bar{Y}_l(s) - \bar{Y})(\bar{Y}_l(s) - \bar{Y})^T .
\end{equation}
Using the lugsail lag window at \eqref{eq:lugsail} in \eqref{eq:wbm} yields
\begin{equation}
\label{eq:lugsail_bm_general}
	\hat{\Sigma}_{k,L} = \dfrac{1}{1-c_n} \hat{\Sigma}_{k,b} - \dfrac{c_n}{1-c_n}\hat{\Sigma}_{k, b/r}\,,
\end{equation}
where $\hat{\Sigma}_{k,b}$ and $\hat{\Sigma}_{k, b/r}$ are weighted BM estimators with respective integer batch sizes $b$ and $b/r $.  

\begin{theorem} The lugsail estimator, $\hat{\Sigma}_{k,L}$ inherits (strong) consistency from $\hat{\Sigma}_{k,b}$. 
\end{theorem}

%

\subsection{Computational efficiency}
In general, weighted BM estimators can have a similar order of computational complexity as SV estimators.  Since computational efficiency is a necessity, we focus on piece-wise linear lag windows for which $\Delta_2 = 0$ almost everywhere eliminating most terms of the outer sum in \eqref{eq:wbm}.  Specifically, using the Bartlett lag window in \eqref{eq:wbm} yields the fast multivariate BM estimator \citep{chen:seila:1987}. For $n = ab$, let $a$ be the number of batches and $b$ be the batch size such that the following standard assumption holds.
\begin{assumption}	\label{ass:b_increasing}
The integer sequence $b$ is such that $b \to \infty$ and $n/b \to \infty$ as $n\to \infty$, and both $b$ and $n/b$ are nondecreasing.
\end{assumption}
For $l = 0, 1, \dots, a-1$, the mean vector for batch $l$ of size $b$ is $\bar{Y}_l(b) = b^{-1} \sum_{t=1}^{b}Y_{lb + t}$. Then the multivariate BM estimator is
\begin{equation*}
	\label{eq:bm}
	\hat{\Sigma}_{b} = \dfrac{b}{a-1} \sum_{l=0}^{a-1} (\bar{Y}_l(b) - \bar{Y})(\bar{Y}_l(b) - \bar{Y})^T\,.
\end{equation*}
%
Using the lugsail Bartlett lag window in \eqref{eq:lugsail_bm_general}, we obtain the lugsail BM estimator
\begin{equation}
\label{eq:lugsail_bm_multi}
	\hat{\Sigma}_{L} = \dfrac{1}{1-c_n} \hat{\Sigma}_{b} - \dfrac{c_n}{1-c_n}\hat{\Sigma}_{b/r }\,.
\end{equation}
%
We present the bias results for $\hat{\Sigma}_{b}$  and $\hat{\Sigma}_{L}$ which indicate that if $r > 1/c_n$, the lugsail BM estimator has a positive first-order bias.  The proof of the following theorem is in the online supplement. 
%

\begin{theorem}
	\label{thm:bias_L}
Under Assumption~\ref{ass:alpha_mixing} with $q = 1$, 	
	\[\text{\text{Bias}} \left(\hat{\Sigma}_{b} \right) =  \dfrac{\Gamma}{b} + o \left(\dfrac{1}{b} \right)\,.\]
As a consequence,
\begin{equation*}
\label{eq:bias_lugBM}
	\text{\text{Bias}}\left(\hat{\Sigma}_{L}\right) =  \dfrac{\Gamma}{b} \left(\dfrac{1 -rc_n}{1-c_n}\right) + o\left( \dfrac{1}{b}\right)\,.
\end{equation*}
Further, under Assumption~\ref{ass:b_increasing}, $\displaystyle \lim_{n\to \infty} \text{\text{Bias}}\left(\hat{\Sigma}_{L}\right) = 0$.
\end{theorem}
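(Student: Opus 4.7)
The plan is to establish the bias formula for the ordinary BM estimator $\hat\Sigma_b$ by expanding its expectation, and then deduce the lugsail assertion and its vanishing from linearity and Assumption~\ref{ass:b_increasing}. Writing $Z_l = \bar Y_l(b)-\mu$ and $W = \bar Y - \mu$, the identity $\sum_{l=0}^{a-1} Z_l = aW$ collapses the cross terms in $(\bar Y_l(b)-\bar Y)(\bar Y_l(b)-\bar Y)^T$ and gives
\[
\hat\Sigma_b = \frac{b}{a-1}\Bigl[\sum_{l=0}^{a-1} Z_l Z_l^T - a\,W W^T\Bigr].
\]
Taking expectations and invoking stationarity, $\E[Z_lZ_l^T] = b^{-1}\sum_{|j|<b}(1-|j|/b)R(j)$ and $\E[WW^T] = n^{-1}\sum_{|j|<n}(1-|j|/n)R(j)$, so
\[
\E[\hat\Sigma_b] = \frac{1}{a-1}\bigl[a\,A_b - A_n\bigr],\qquad A_m := \sum_{|j|<m}\Bigl(1-\tfrac{|j|}{m}\Bigr)R(j),
\]
reducing the problem to an asymptotic expansion of $A_b$ and $A_n$.

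The central analytic step is the summability $\sum_{j\in\mathbb Z}|j|\,\|R(j)\|<\infty$ under Assumption~\ref{ass:alpha_mixing} with $q=1$. Davydov's covariance inequality for $\alpha$-mixing sequences combined with $\E\|Y_1\|^{2+\delta}<\infty$ gives $\|R(j)\| \lesssim \alpha(|j|)^{\delta/(2+\delta)}$, and the mixing rate $\alpha(n)=o(n^{-(2+\epsilon)(1+2/\delta)})$ is precisely sharp enough to make $|j|\,\|R(j)\|$ summable. Granted this, the tail bound $\sum_{|j|\ge m}\|R(j)\|\le m^{-1}\sum_{|j|\ge m}|j|\,\|R(j)\|=o(1/m)$ together with the identity $\sum_{s\ne 0}|s|R(s)=-\Gamma$ yields
\[
A_m = \Sigma + \frac{\Gamma}{m} + o\Bigl(\frac{1}{m}\Bigr)\qquad\text{as }m\to\infty.
\]
Substituting with $m=b$ and $m=n$, the leading $\Sigma$ contributions combine as $(a-1)\Sigma$, while the secondary remainders $\Gamma/((a-1)b)$ and $O(1/((a-1)n))$ are both $o(1/b)$ once $a=n/b\to\infty$, producing the claimed $\text{Bias}(\hat\Sigma_b) = \Gamma/b + o(1/b)$.

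For the lugsail bias, linearity of expectation applied to \eqref{eq:lugsail_bm_multi} and the just-proved identity for both $\hat\Sigma_b$ and $\hat\Sigma_{\lfloor b/r\rfloor}$ give
\[
\text{Bias}(\hat\Sigma_L) = \frac{1}{1-c}\frac{\Gamma}{b} - \frac{c}{1-c}\frac{\Gamma}{\lfloor b/r\rfloor} + o\!\left(\frac{1}{b}\right) = \frac{\Gamma}{b}\,\frac{1-cr}{1-c} + o\!\left(\frac{1}{b}\right),
\]
where $\lfloor b/r\rfloor = b/r + O(1)$ does not affect the leading coefficient. The final limit follows at once since $b\to\infty$ under Assumption~\ref{ass:b_increasing}. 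The main obstacle is the first-moment autocovariance summability under the weakened hypotheses: earlier bias proofs \citep{chie:gold:mela:1997,song:schm:1995} relied on $\phi$-mixing with twelfth moments, so care is required to confirm that the specified $\alpha$-mixing decay rate combined with Davydov's inequality indeed delivers $\sum_j|j|\,\|R(j)\|<\infty$; the remaining manipulations are standard truncation bounds on Cesaro-type averages of autocovariances.
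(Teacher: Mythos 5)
Your proposal is correct and follows essentially the same route as the paper: both reduce $\E[\hat\Sigma_b]$ to the difference of batch-mean and grand-mean variances, and both obtain the key summability $\sum_s |s|\,\|R(s)\|<\infty$ from the $\alpha$-mixing covariance inequality (the paper via Ibragimov--Linnik, you via Davydov, which is the same bound). The only difference is that where the paper cites \citet[Lemmas 1--3]{song:schm:1995} for the expansion $m\Var_F(\bar Y_m)=\Sigma+\Gamma/m+o(1/m)$, you derive it directly from the Ces\`aro identity $A_m=\Sigma-\sum_{|j|\ge m}R(j)+m^{-1}\sum_{s\neq 0}|s|R(s)+o(1/m)$, which is a harmless (and arguably cleaner) self-contained substitute.
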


Theorem~\ref{thm:bias_L} makes two contributions. First, the bias of the multivariate BM estimator has not been studied as \cite{chie:gold:mela:1997}, \cite{fleg:jone:2010}, and \cite{song:schm:1995} only consider $p=1$.  Second, the results therein assume $\phi$-mixing and 12 finite moments.  This is especially problematic since $\phi$-mixing Markov chains are uniformly ergodic, a property that is often not satisfied for MCMC algorithms. Assumption~\ref{ass:alpha_mixing} significantly weakens these conditions. 

We require $\{Y_t\}$ to satisfy a strong invariance principle to establish variance of the lugsail BM estimator. Let $B(n)$ be a $p$-dimensional standard Brownian motion. 

\begin{theorem} \label{thm:kuelbs}
\citep{kuel:phil:1980} Under Assumption~\ref{ass:alpha_mixing} for $q = 1$, a strong invariance principle holds. That is, there exists a $p\times p$ lower triangular matrix $L$, $\lambda > 0$ with $\psi(n) = n^{1/2 - \lambda}$, a finite random variable $D$, and a sufficiently rich probability space $\Omega$ such that for almost all $\omega \in \Omega$ and for all $ n > n_0$, with probability 1,
\begin{equation}
	\label{eq:SIP}
	\left\|\sum_{t=1}^{n} {Y_t} - n \mu_g  - LB(n) \right \| < D(\omega) \psi(n)\,.
\end{equation}
\end{theorem}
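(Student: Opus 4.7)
The plan is to present this as a direct application of the Kuelbs--Philipp (1980) strong invariance principle for multivariate $\alpha$-mixing sequences, so the work reduces to verifying that the hypotheses of their theorem follow from Assumption~\ref{ass:alpha_mixing} with $q=1$. First I would recall the precise form of the Kuelbs--Philipp result: for a stationary $\mathbb{R}^p$-valued $\alpha$-mixing process with $\E\|Y_1\|^{2+\delta}<\infty$ and polynomial mixing rate sufficient to control the block approximation error, the partial sums can be coupled on an enriched probability space with a Brownian motion whose covariance is $\Sigma = LL^T$, with almost sure error $o(n^{1/2-\lambda})$ for some $\lambda>0$. The triangular matrix $L$ in the conclusion is then a Cholesky factor of the long-run covariance $\Sigma$, which is finite under Assumption~\ref{ass:alpha_mixing} (this is already noted in the paper as the purpose of the assumption).

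Next I would verify the two hypotheses. The moment condition $\E\|Y_1\|^{2+\delta}<\infty$ is granted directly by Assumption~\ref{ass:alpha_mixing}. For the mixing rate, Kuelbs--Philipp require essentially $\alpha(n) = O\bigl(n^{-(1+\epsilon')(1+2/\delta)}\bigr)$ for some $\epsilon'>0$ in order to obtain a SIP with a power-type error term $n^{1/2-\lambda}$. With $q=1$, Assumption~\ref{ass:alpha_mixing} supplies $\alpha(n)=o\bigl(n^{-(2+\epsilon)(1+2/\delta)}\bigr)$, which strictly dominates their requirement; the excess $(1+\epsilon)(1+2/\delta)$ decay in the mixing rate is exactly the slack needed to produce a strictly positive $\lambda$ in the error bound $\psi(n)=n^{1/2-\lambda}$ and to guarantee the almost-sure finite random variable $D(\omega)$.

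Main obstacle: the only real subtlety is translating between the two common parametrizations of the Kuelbs--Philipp hypothesis — one expressed as a polynomial rate on $\alpha(n)$ and another as a summability condition on $\alpha(n)^{\delta/(2+\delta)}$ (or related mixing functionals). I would spell out that our polynomial rate with exponent $(2+\epsilon)(1+2/\delta)$ implies the summability version by a direct comparison, and that this in turn yields the SIP with some $\lambda\in(0,1/2)$ depending only on $\delta$ and $\epsilon$. Beyond this bookkeeping, no additional calculation is required: the conclusion \eqref{eq:SIP} is exactly the output of Kuelbs--Philipp's theorem applied to $\{Y_t - \E_F g\}$.
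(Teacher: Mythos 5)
Your proposal is correct and takes essentially the same route as the paper, which states this result as a direct citation of Kuelbs and Philipp (1980) and notes in Remark~\ref{rem:weak_cond_sip} that their theorem in fact only requires the weaker rate $\alpha(n) = o\left(n^{-(1+\epsilon)(1+2/\delta)}\right)$ and that $L$ satisfies $LL^T = \Sigma$. The one small inaccuracy is your claim that the extra decay supplied by Assumption~\ref{ass:alpha_mixing} with $q=1$ is ``exactly the slack needed'' for a positive $\lambda$: the weaker Kuelbs--Philipp rate already yields $\lambda > 0$, and the stronger exponent $(2+\epsilon)(1+2/\delta)$ is imposed only to keep all of the paper's results under a single umbrella assumption.
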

\begin{remark}\label{rem:weak_cond_sip}
In fact, \cite{kuel:phil:1980} require a slightly weaker condition of $\alpha(n) = o(n^{-(1 + \epsilon)(1 + 2/\delta)})$, but in order to keep all our results under the same umbrella assumption, we assume $\alpha(n) = o(n^{-(2 + \epsilon)(1 + 2/\delta)})$. Here $L$ is such that $LL^T = \Sigma$. A strong invariance principle with rate $\psi(n)$ is known to hold for many processes, including regenerative processes, $\phi$-mixing, and strongly mixing processes \citep[see][for a discussion]{vats:fleg:jones:2018}. 
\end{remark}
The proof of the following theorem is in the supplementary material.
\begin{theorem}
	\label{thm:variance}
	Let the assumptions of Theorem~\ref{thm:kuelbs} hold such that $\E D^4 < \infty$ and $\E_F \|Y_1^4\| < \infty$. Further, let $b$ satisfy Assumption~\ref{ass:b_increasing}  and $b^{-1} \psi^2(n) \log n \to 0$ as $n \to \infty$. Then
\[ 
\dfrac{n}{b} \Var\left(\hat{\Sigma}^{ij}_{L}\right) = \left[\dfrac{1}{r}  + \dfrac{r-1}{r(1-c_n)^2}\right] \left(\Sigma_{ij}^2 +  \Sigma_{ii}\Sigma_{jj} \right) + o \left(1 \right) \,.
\]
\end{theorem}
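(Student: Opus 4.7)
The plan is to exploit the linearity of $\hat{\Sigma}_{L}$ in \eqref{eq:lugsail_bm_multi} to reduce the problem to
\begin{align*}
\Var(\hat{\Sigma}_L^{ij}) = \tfrac{1}{(1-c)^2}\Var(\hat{\Sigma}_b^{ij}) + \tfrac{c^2}{(1-c)^2}\Var(\hat{\Sigma}_{\lfloor b/r\rfloor}^{ij}) - \tfrac{2c}{(1-c)^2}\,\text{Cov}(\hat{\Sigma}_b^{ij}, \hat{\Sigma}_{\lfloor b/r\rfloor}^{ij}),
\end{align*}
so it suffices to find the limiting variance of a standard multivariate BM estimator and the limiting covariance between two such estimators at different batch sizes. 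For each of the three terms the strategy is the strong invariance principle of Theorem~\ref{thm:kuelbs}: because $\hat{\Sigma}_b$ depends on $\{Y_t\}$ only through partial sums at batch endpoints, I may replace $\sum_{t=1}^{m} Y_t - m\mu_g$ by $LB(m)$ at cost $D(\omega)\psi(n)$. A standard manipulation in the spirit of \citet{fleg:jone:2010}, upgraded to the multivariate $\alpha$-mixing setting, shows that $\hat{\Sigma}_b$ and its Brownian-motion analogue $\tilde{\Sigma}_b$, obtained by substituting scaled Brownian increments $L[B((l{+}1)b)-B(lb)]/b$ for each batch mean, differ by a matrix whose $(i,j)$-variance is $o(b/n)$ under $b^{-1}\psi^2(n)\log n\to 0$ and $\E D^4 < \infty$.

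For each stand-alone variance, the entries of $\tilde{\Sigma}_b$ are quadratic forms in independent Gaussian batch increments, so Isserlis' identity for products of four jointly normal variables delivers the classical limit $\tfrac{n}{b}\Var(\hat{\Sigma}_b^{ij}) \to \Sigma_{ii}\Sigma_{jj} + \Sigma_{ij}^2$. The identical calculation at batch size $\lfloor b/r\rfloor$ gives $\tfrac{n}{b}\Var(\hat{\Sigma}_{\lfloor b/r\rfloor}^{ij}) \to r^{-1}(\Sigma_{ii}\Sigma_{jj} + \Sigma_{ij}^2)$ after accounting for the extra $r^{-1}$ in the normalization.

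The heart of the argument is the cross-covariance. Both $\tilde{\Sigma}_b$ and $\tilde{\Sigma}_{\lfloor b/r\rfloor}$ are built from the \emph{same} Brownian motion but partition $[0,n]$ at scales $b$ and $b/r$. The key observation is that, up to rounding, each batch of size $b$ decomposes into $r$ consecutive batches of size $b/r$; expressing both estimators as quadratic forms in this common finer partition of Gaussian increments and applying Isserlis' identity, only ``diagonal'' pairings survive, and a combinatorial count produces $\tfrac{n}{b}\,\text{Cov}(\hat{\Sigma}_b^{ij}, \hat{\Sigma}_{\lfloor b/r\rfloor}^{ij}) \to r^{-1}(\Sigma_{ii}\Sigma_{jj} + \Sigma_{ij}^2)$. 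Substituting the three limits, the coefficient of $\Sigma_{ii}\Sigma_{jj} + \Sigma_{ij}^2$ simplifies as
\[
\tfrac{1}{(1-c)^2} + \tfrac{c^2}{r(1-c)^2} - \tfrac{2c}{r(1-c)^2} \,=\, \tfrac{r-2c+c^2}{r(1-c)^2} \,=\, \tfrac{1}{r} + \tfrac{r-1}{r(1-c)^2},
\]
which is the asserted constant.

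The main obstacle is the covariance step: since $\lfloor b/r\rfloor$ need not divide $b$, the two batching schemes only \emph{approximately} nest, and one must verify that the boundary mismatches contribute $o(b/n)$ rather than a spurious constant. A secondary technical difficulty is controlling the SIP error at the quadratic level; even though the raw linear error in the partial sums is $O(\psi(n))$, squaring into the estimator generates cross terms of order $\psi(n)\|S_m\|/b$, and bounding their second moments requires $\E D^4 < \infty$ together with a maximal inequality for $\|S_m\|$, giving rise precisely to the requirement $b^{-1}\psi^2(n)\log n\to 0$.
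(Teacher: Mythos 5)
Your proposal is correct and follows essentially the same route as the paper: the paper's expansion of $\E[\tilde{\Sigma}_{L,L}^{2,ij}]$ into the three terms $A_1$, $A_2$, $A_3$ is exactly your variance--variance--covariance decomposition, the individual and cross limits you state ($1$, $r^{-1}$, and $r^{-1}$ times $\Sigma_{ii}\Sigma_{jj}+\Sigma_{ij}^2$, respectively) match what the paper obtains via Propositions~\ref{prop:exp_product_squares} and \ref{prop:4th_product_expectation} and the nested-batch covariance identities, and the transfer from the Brownian analogue to $\hat{\Sigma}_L$ via the strong invariance principle under $\E D^4<\infty$ and $b^{-1}\psi^2(n)\log n\to 0$ is the content of Lemma~\ref{lem:tilde_hat_equi_moment}. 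The only point worth noting is that the rounding issue you flag for $\lfloor b/r\rfloor$ is not actually resolved in the paper either, which treats $b/r$ as an exact divisor throughout.
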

The batch size $b$ is often chosen to be of the form $\lfloor n^{\nu} \rfloor$ for some $0 < \nu \leq 1/2$. Literature on optimal batch sizes for multivariate BM estimators is not as rich as SV estimators, although univariate suggestions exist \citep{damerdji:1995}. In our simulations, we set $b = \lfloor n^{1/2} \rfloor$. 

\section{Practical considerations} 
\label{sec:practical_considerations}

The lugsail family of lag windows are designed to address negative bias in estimators of $\Sigma$. The degree of negative bias depends on the persistence of the correlation in the process, which we denote as moderate, high, and extreme for underlying processes similar to an AR(1) process with coefficient $\rho \in \left\{ [0,0.7), [0.7, 0.95), [0.95, 1) \right\}$, respectively.  Finite-time bias of the variance estimators depends on the persistence of the correlation through $\Gamma^{(q)}$. Thus, one universal recommendation of $r$ and $c_n$ is unreasonable. Since the choice of $r$ and $c_n$ dictates the amount of the positive first-order bias induced, we present correlation-dependent recommendations are summarized in Table~\ref{tab:recos}.

Consider the exact bias of the univariate BM estimator, $\hat{\sigma}^2_b$, \citep{akta:tuba:2007}:
\begin{align*}
\text{Bias}(\hat{\sigma}^2_b)
& = -\dfrac{2(a+1)}{ab}\ds\sum_{s=1}^{b - 1} sR(s) - 2 \sum_{s=b}^{\infty} R(s) -   \dfrac{2}{b-1} \sum_{s = b}^{n-1}\left( 1 - \dfrac{s}{n} \right) R(s)\,.
\end{align*}
Figure~\ref{fig:persis} plots the relative bias (bias divided by truth) of the univariate BM estimator for moderate, high, and extreme correlation autoregressive processes under the different lugsail settings. For moderate correlation, the original BM estimator exhibits small but noticeable  negative bias.  The zero lugsail corrects for most (but not all) of the negative bias while the adapt lugsail demonstrates minimal positive bias and is recommended in such situations.  
\begin{figure}[htbp]
	\centering
	\includegraphics[width = 2in]{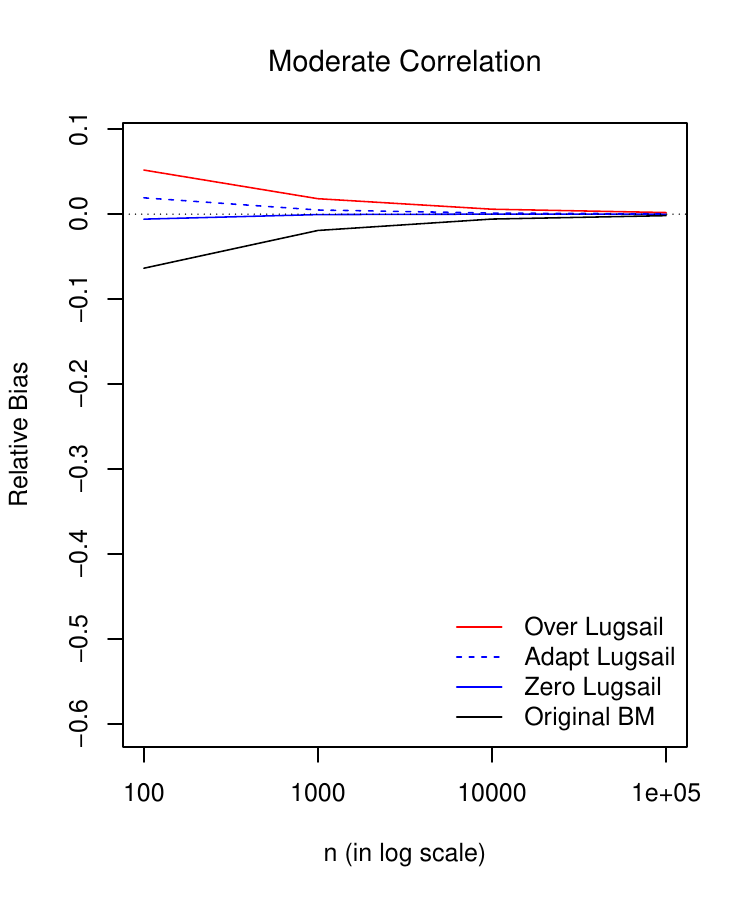} \hspace{-.5cm}
	\includegraphics[width = 2in]{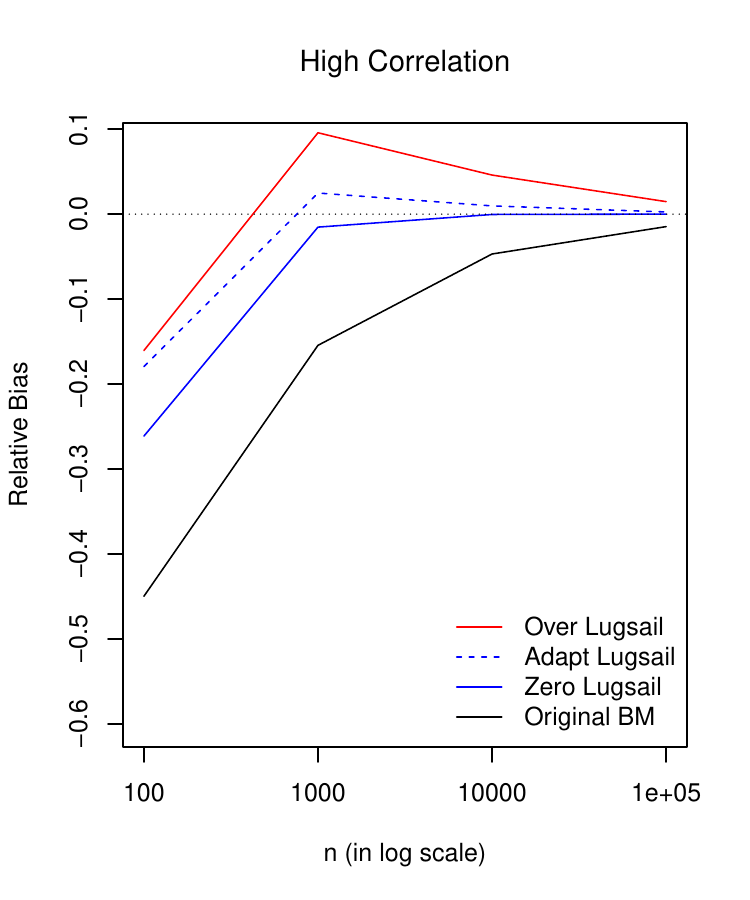}\hspace{-.5cm}
	\includegraphics[width = 2in]{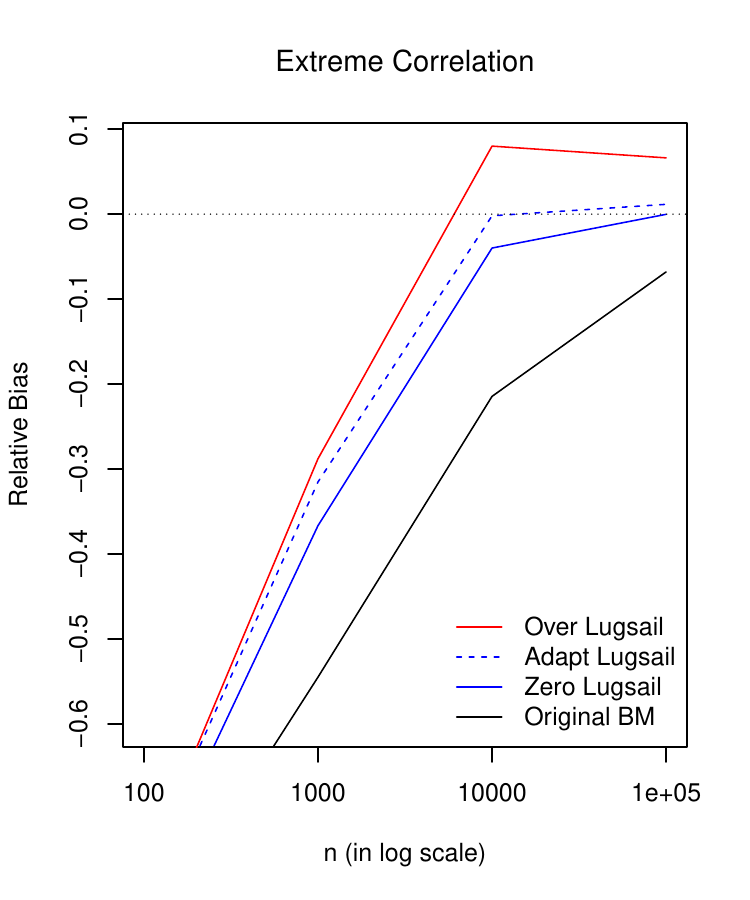}
	\caption{Relative bias of BM versus $\log(n)$ (with $n = 10^2$ to $10^5$) for AR$(p)$ models of different correlations.}
	\label{fig:persis}
\end{figure}

The adapt lugsail performs bias-adjustment as a function of the data, $n$. In small samples, bias is significant, so $c_n$ should be large to allow for a larger offset. For large $n$, the second-order bias is negligible, so $c_n$ and $r$ can be chosen so as to yield zero bias in the first-order term. Thus, we choose $c_n\downarrow c:= r^{-q}$, implying that the adapt lugsail lag window will converge to zero-bias lag windows as $n$ increases. The rate of decay to $c$ should be slow enough to demonstrate this trade-off, which we set as 
\[
c_n^{\text{a}} = \dfrac{\log(n) - \log(b) + 1}{r^q(\log(n) -\log(b)) + 1}\,.
\]
This choice is similar in spirit to the jackknifed estimators of \cite{ding:alex:2015}.  By Assumption~\ref{ass:b_increasing}, $c_n^{\text{a}} \to r^{-q}$, and hence converges to the zero-bias lag window.  We set $r = 2$ ensuring that $r$ is large enough to yield bias adjustment and small enough to control the variance gain. These choices of $c_n$ and $r$ yield the adapt lugsail lag window. In HAC applications where low to moderate correlation is typical, the use of the adapt lugsail is reasonable. 

For high and extreme correlations in Figure~\ref{fig:persis}, the original BM demonstrates substantial negative bias where the adapt and over lugsails are able to remove the negative bias for large sample sizes, albeit with some overestimation. For MCMC, where high and extreme correlation are prevalent (see for example, Section~\ref{sub:bayesian_probit_regression}), a controlled overestimation using over lugsail is often a non-concern as obtaining further samples is relatively easy. Thus, here we recommend setting $r = 3$ and $c^{\text{o}} = 2/(1 + r^q)$, which is based on the fact that it is better to overestimate the variance rather than underestimate it \citep{simon:1993}.  Our choice is based on offsetting the state-of-the-art first-order bias of say, $-m$, since the current literature has been satisfied with this underestimation.  Specifically, we choose $r$ and $c_n$ to induce a first-order bias in the right direction of $+m$, which decreases the overall negative bias considerably.  From our bias results, the over lugsail choices of $r$ and $c_n = c^{\text{o}}$ require 
\[
\dfrac{1 - r^qc^{\text{o}}}{1-c^{\text{o}}} = -1 \Rightarrow c^{\text{o}} = \dfrac{2}{1 + r^q}\,.
\]
We set $r = 3$ so that for $q = 1$, this yields $c = 1/2$ and for $q = 2$, we get $c = 1/5$. Although other choices of $r$ may also be used, smaller choices yield larger $c^{\text{o}}$ increasing the variance of the estimator and larger choices are practically inconvenient as they require $b$ to be large enough so that $b/r$ is large enough. Empirically $r = 3$ provides a good balance between these trade-offs. The online supplement presents variances of the estimators for these choices of $r$ and $c$ along with general expressions. 

An additional practical concern is that finite sample estimates of $\Sigma$ may not be positive-definite. In fact, Bartlett, QS, and BM estimators are only guaranteed to be positive-semidefinite.  Further, TH and lugsail estimators can have negative eigenvalues.  To ensure positive-definiteness, we provide an adjusted estimator that retains the large sample properties of the original similar to \cite{jen:pol:2015}.

\begin{table}[tb]
	\caption{Summary of the recommendations for the lugsail lag window settings.}
	\label{tab:recos}
	\centering

	\begin{tabular}{|l|c|l|l|}
	\hline
        Correlation  & Lugsail window & $r$ & $c_n$ \\ \hline
	Moderate & Zero Lugsail & 2  & $r^{-q}$ \\
	Moderate to High & Adapt Lugsail  & 2 & $c_n^{\text{a}}$ \\ 
	High to Extreme & Over Lugsail & 3 & $c^{\text{o}}$ \\
	\hline
	\end{tabular}
\end{table}

Let $\hat{\Sigma}_n$ be any estimator of $\Sigma$ and $\hat{V} = \text{diag}(\hat{\Sigma}_n)$, that is, $\hat{V}$ is the diagonal matrix of the univariate variance estimates. Consider the correlation matrix corresponding to $\hat{\Sigma}_n$, $\hat{C}_n = \hat{V}^{-1/2} \hat{\Sigma}_n \hat{V}^{-1/2}$. Note $\hat{C}_n$ is a symmetric matrix with real-valued entries, and hence the eigenvalue decomposition $\hat{C}_n = P\hat{D}_n P^T$ exists. Here $P$ is a $p\times p$ orthogonal matrix and $D = \text{diag}(\hat{d}_1, \dots, \hat{d}_p)$ is the diagonal matrix of eigenvalues of $\hat{C}_n$. 

If $\hat{\Sigma}_n$ is not positive-definite, some eigenvalues, $\hat{d}_i$, are not positive. To correct define $\hat{d}^+_i = \max\{\hat{d}_i, \epsilon n^{-u}\}$ for $\epsilon > 0$ and $u > 0$. Then $\epsilon n^{-u} \to 0$ as $n\to \infty$ and $\lim_{n\to \infty}\hat{d_i} = d_i > 0$ due to positive-definiteness of $\Sigma$; here $d_i$ are the eigenvalues for the population correlation matrix. Let $\hat{D}^+ = \text{diag}(\hat{d}^+_1, \dots, \hat{d}^+_p)$, then the adjusted estimator is
\[
	\hat{\Sigma}_n^+ = \hat{V}^{1/2}P^T \hat{D}^+ P\hat{V}^{1/2}\,.
\]
The constants $\epsilon$ and $u$ are user-chosen. We suggest $\epsilon = \sqrt{\log(n)/p}$ and $u = 9/10$, which work well in practice with no problem-specific tuning required. 

\section{Examples} 
\label{sec:examples}

\subsection{HAC estimation example} 
\label{sub:time_series_example}

For $t = 1, 2, \dots, n,$ consider the linear regression model for $y_t = x_t^T \beta + u_t\,,$ where $\beta$ is a $p$-vector of coefficients, $x_t$ is the $p$-vector of covariates, and $u_t$ are autocorrelated, zero mean, and possibly conditionally heteroskedastic. The ordinary least squares estimator of $\beta$ is $\hat{\beta}_{\text{ols}} = \left(\sum x_tx_t^T \right)^{-1} \sum x_ty_t$.
Let $v_t = x_tu_t$, and consider the process $\{v_t\}$. In many situations, the estimator satisfies asymptotic normality, so that as $n \to \infty$,
\[
\sqrt{n}(\hat{\beta}_{\text{ols}} - \beta) \overset{d}{\to}N(0, M \Sigma M)\,,
\]
where $M$ is a known symmetric matrix and $\Sigma = \sum_{k=-\infty}^{\infty} \text{Cov}(v_t, v_{t+k})\,.$
Inference on $\hat{\beta}_{\text{ols}}$ is critically dependent on the estimator of $\Sigma$. Additional assumptions on the process $\{v_t\}$ and the ordinary least squares estimator $\hat{\beta}_{\text{ols}}$ are discussed in \cite{andr:1991}.
 
We implement an AR1-HOMO model which constructs two independent AR(1) processes for the errors $u_t$ and the regressors $x_t$, so that $\{v_t\}$ is a centered process. Consider for $t = 1, \dots, n$ $x_t = \rho_x x_{t-1} + \alpha_t$ and $u_t = \rho_u u_{t-1} + \epsilon_t$,
where $\alpha_t \sim N_p(0, W)$ and $\epsilon_t \sim N(0,w)$. The limiting distribution of $x_t$ is $N_p(0, \Lambda = W/(1-\rho_x^2) )$ and of $u_t$ is $N(0, \lambda = w/(1-\rho_u^2)$.  The true $\Sigma$ is
\[
\Sigma = \lambda \Lambda + \lambda(\Lambda + \Lambda^T) \left(\dfrac{\rho_u \rho_x}{1 - \rho_u \rho_x} \right)\,.
\]
%
We set $p = 5$, $w = 1$, $W$ to be the AR correlation matrix with coefficient $.99$. We let $\rho_u = \rho_x \in \{.50, .70, .90\}$ and generate the process for $n \in \{500, 1000\}$ with $\beta$ set to be the zero vector.

Since the goal is inference for $\hat{\beta}_{\text{ols}}$, we follow \cite{simon:1993} and assess the quality of estimation of $\Sigma$ via coverage probabilities of asymptotic confidence regions. Table~\ref{tab:hac_coverage} contains results from 1000 replicated simulations.  Almost systematically, lugsail lag windows yield higher coverage probabilities than the original lag windows. Since the center of all the regions are the same, the difference in coverage probability is a direct consequence of the lugsail estimators being larger in its determinant than the non-lugsail versions. As expected, the coverage is lowest in the higher correlation instances and when $n$ is small.  Coverage probabilities for the TH and QS lag windows are virtually identical, and hence the TH results are not included in Table~\ref{tab:hac_coverage}.

\begin{table}[tb]
	\caption{HAC: Coverage probabilities for 90\% confidence regions over 1000 replications for lugsail Bartlett and QS lag windows (largest standard error is .0158).}
	\label{tab:hac_coverage}
\centering
	\begin{tabular}{|c|c|cccc|cccc|}
	\hline
	$n$ &  $\rho_u$ & \multicolumn{4}{c|}{Bartlett} & \multicolumn{4}{c|}{Quadratic Spectral} \\  \hline 
	&  & - & Zero & Adapt & Over  & - & Zero & Adapt & Over\\ \hline
500 & .50 & 0.825  &  0.837  & 0.838 &  0.861  & 0.836 &  0.853 &  0.858 &  0.864\\ 
500 & .70  & 0.757  &  0.774  &  0.769  &  0.798  &  0.784  &  0.791  &  0.792  &   0.801 \\ 
500 & .90  & 0.553  & 0.532  & 0.529  &  0.573 & 0.588  & 0.581  & 0.579 &  0.596 \\  \hline
1000 & .50 &  0.840 &  0.851  & 0.852  & 0.869  & 0.851  & 0.855 &  0.857  & 0.861\\ 
1000 & .70  & 0.821  &  0.842  & 0.841 &  0.867  & 0.851  & 0.860  & 0.859  & 0.863 \\ 
1000 & .90  &  0.661  & 0.672  & 0.670  & 0.717  & 0.696  & 0.706  & 0.704  & 0.721 \\  \hline
	\end{tabular}
\end{table}

Since the true value of $\Sigma$ is known, we can also estimate  bias. Over 1000 replications, we record the average relative bias on the diagonals of an estimate $\hat{\Sigma}$,
\[
	\dfrac{1}{p} \ds \sum_{i=1}^{p} \dfrac{\hat{\Sigma}_{ii} - \Sigma_{ii}}{\Sigma_{ii}}\,,
\]
which conserves bias direction. The results for $n = 500$ are in Figure~\ref{fig:hac_bias}. In each setting, the lugsail version of the lag windows exhibit smaller downward bias where this bias is more significant for $\rho_x = \rho_u = .90$. Again the results for TH and QS windows (where $q = 2$) are virtually identical while being less negatively biased compared to the Bartlett family of lag windows.
\begin{figure}[tb]
	\centering
\includegraphics[width = 5.8in]{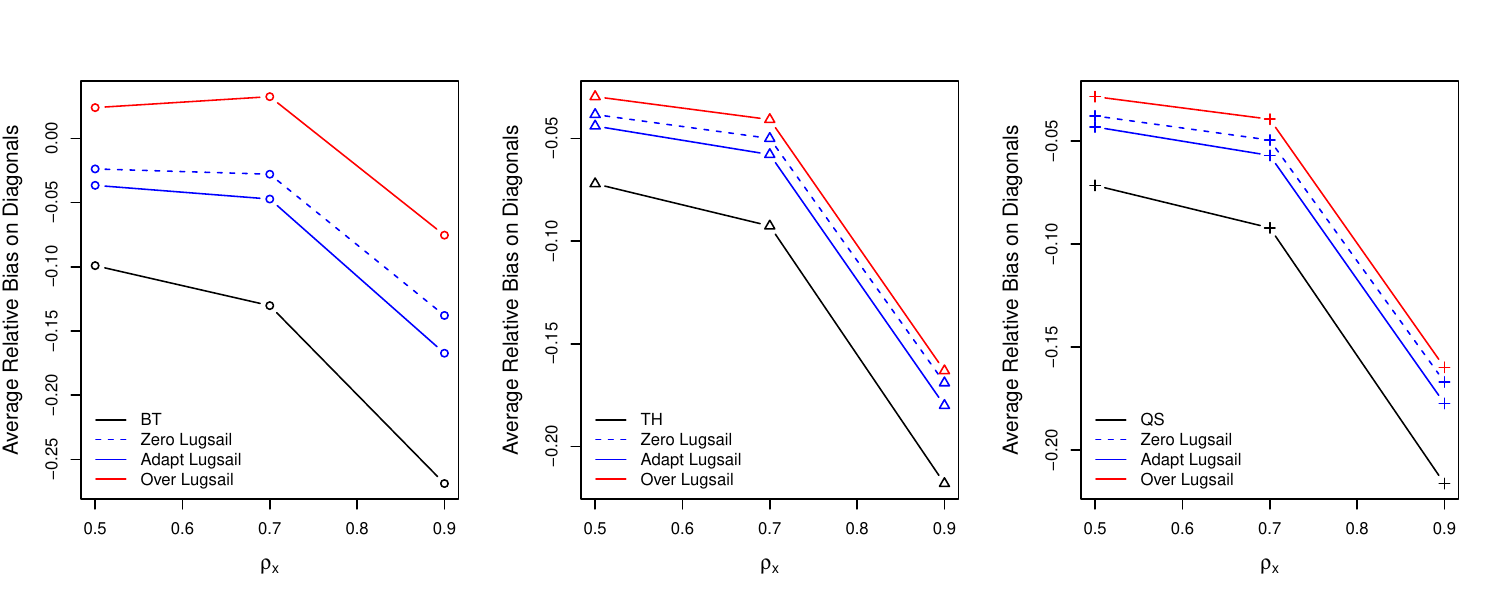}
	\caption{HAC: Average relative bias on the diagonals for Bartlett (left), TH (middle), and QS (right).}
	\label{fig:hac_bias}
\end{figure}

Finally, Table~\ref{tab:sv_time} compares compute time for lugsail BM versus lugsail SV estimators with $\rho_u = .50$, $p \in (10, 50)$ and $n \in (10^3, 10^4, 10^5)$. Output of this size is common in MCMC and steady-state simulations. For a $10$-dimensional problem, SV estimators are orders of magnitude slower than BM estimators, and the QS lag window is particularly slow since the lag window is not truncated. Thus, in the rest of the examples, where $n$ and $p$ will be large, we focus on only weighted BM estimators. 
\begin{table}[bt]
\caption{Running time (sec) for lugsail  BM and lugsail SV estimators from 10 replications.}
\label{tab:sv_time}
\centering
\begin{tabular}{|l|rr|rr|rr|}
\hline
$n$& \multicolumn{2}{c|}{$10^3$} &  \multicolumn{2}{c|}{$10^4$} &  \multicolumn{2}{c|}{$10^5$} \\  \hline
$p$ & $10$ & $50$ & $10$ & $50$ & $10$ & $50$\\ \hline
Over BM & 0.001  & 0.002  &  0.001  &   0.006   &   0.010    &   0.078 \\ 
Over Bartlett & 0.025  & 0.186  &  0.406   &  4.357   &   7.590   &   94.317 \\ 
Over TH & 0.024 & 0.183  &  0.333   &  3.392   &   4.948  &   55.689 \\ 
Over QS & 0.275  & 4.059  & 28.566  & 410.518  & 2225.032  & 31876.747 \\  \hline
\end{tabular}
\end{table}

\subsection{Time series example} 
\label{sub:vector_autoregressive_process}

We consider a vector autoregressive process of order 1. Let $\Phi$ be a $p\times p$ matrix with spectral norm less than 1 and let $\Omega$ be a $p \times p$ positive definite matrix. For $t = 1, 2, \dots$ let $Y_t \in  \mathbb{R}^p$ such that $Y_t = \Phi Y_{t-1} + \epsilon_t$ and $\epsilon_t \sim N_p(0, \Omega)$.  The stationary distribution is $N_p(0, V)$ where $vec(V) = (I_p^2 - \Phi \otimes \Phi)^{-1} vec(\Omega)$. In addition, the chain is geometrically ergodic when the spectral norm of $\Phi$ is less than 1 \citep{tjos:1990} so Assumption~\ref{ass:alpha_mixing} is satisfied. We are interested in estimating $\E Y_t = 0$ and the true  $\Sigma$ is available in closed form \citep{dai:jone:2017}. We set $\Phi = \rho I_p$ for $\rho > 0$ and $\Omega$ to be the AR correlation matrix with coefficient .9. 
%
%
%
%
%
We set $p = 10$ and  $\rho = .95$ and over 1000 replications estimate $\Sigma$ using three different BM methods. The settings are chosen to generate a high correlation process. 

Since $\Sigma$ is known for this example, we estimate the average relative bias on the diagonals over 1000 replications in Figure~\ref{fig:var_bias}. As $n$ increases, the relative bias for all three methods converge to zero, however, over lugsail is converging from above while all others are mostly converging from below.
\begin{figure}[htb]
	\centering
	\includegraphics[width = 2.5in]{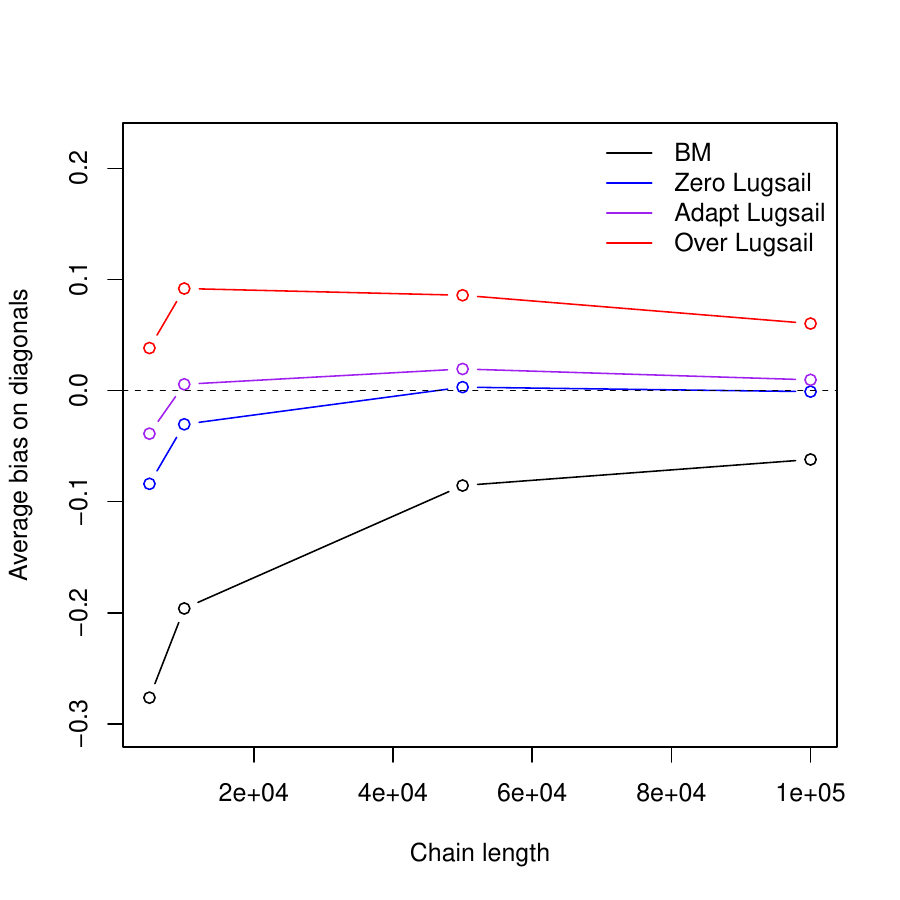}
	\includegraphics[width = 2.5in]{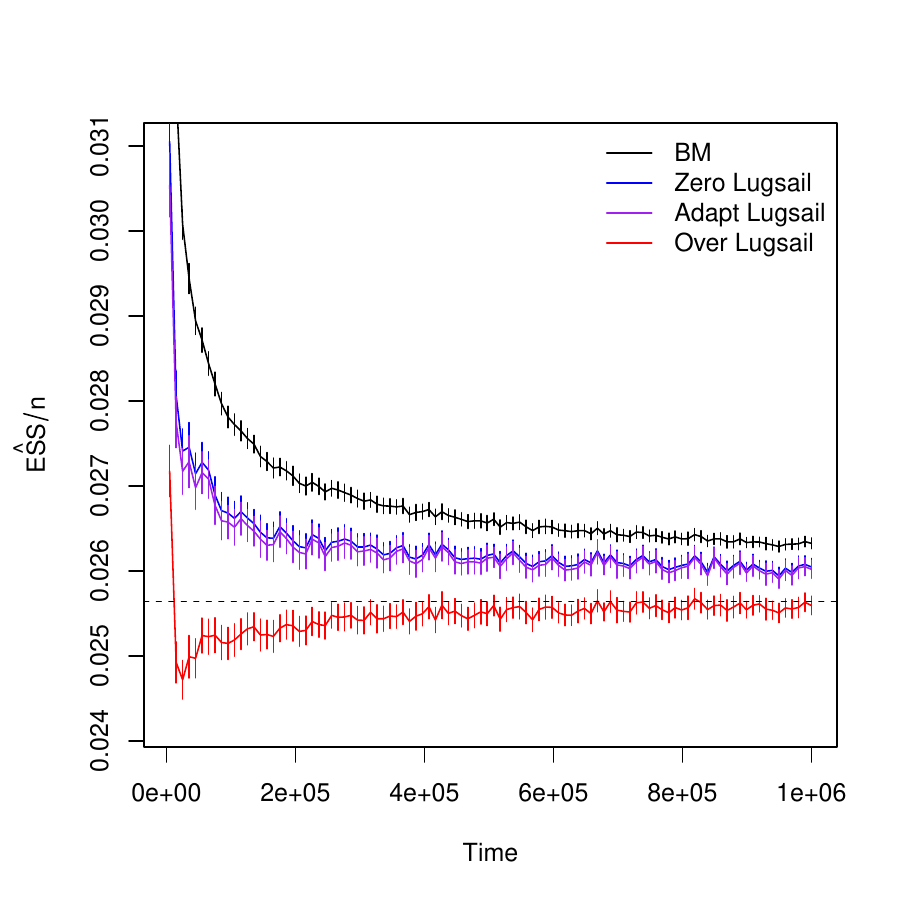}
	\caption{VAR: (Left) Average relative bias on the diagonals over 1000 replications. (Right) Running plot of $\widehat{\text{ESS}}/n$ (with standard errors) from 100 replications.}
	\label{fig:var_bias}
\end{figure}

Let $|\cdot|$ denote determinant. One practical use of $\Sigma$ is determining the effective sample size as a way of summarizing the variability \citep{vats:fleg:jones:2017b}. Since for $n \to \infty$,
\[
\dfrac{\widehat{\text{ESS}}}{n} = \left(\dfrac{|\widehat{\Var}_F(Y_1)|}{|\hat{\Sigma}|} \right)^{1/p} \to \left(\dfrac{|\Var_F(Y_1)|}{|{\Sigma}|} \right)^{1/p}\,,
\]
we compare the quality of estimation of $\widehat{\text{ESS}}/n$. It is critical that $\widehat{\text{ESS}}$ is not overestimated, so as to not cause early termination of an MCMC simulation.  Figure~\ref{fig:var_bias} presents the average of the running estimate of the effective sample size divided by $n$ for BM estimators based on 100 replications. The original BM, zero lugsail, and adapt lugsail all produce significant over estimation of $\widehat{\text{ESS}}/n$ that will lead to early termination. On the other hand, the over lugsail estimate of $\widehat{\text{ESS}}/n$ converges to the truth from below, which dramatically improves the quality of estimation and ensures simulations are not terminated prematurely.

\subsection{Bayesian logistic regression} 
\label{sub:bayesian_probit_regression}

We consider a subset of the data from the ongoing cardiovascular study of the residents of Framingham, Massachusetts (from Kaggle.com with 4238 observations). The  binary response variable identifies whether the patient has a 10 year risk of coronary heart disease. There are 15 covariates including demographic information, behavioral information, medical history, and present medical condition. Since some covariates are categorical, the model matrix for a regression model with intercept has rows $x_i^T = (x_{i1}, \dots, x_{i18})^T$. Let $\beta \in \mathbb{R}^{18}$ and consider a Bayesian logistic regression model with intercept,
\[
\Pr(Y_i = 1) = \dfrac{\exp\{x_i^T \beta \}}{1 + \exp\{x_i^T \beta \}}\,. 
\]
We assign a multivariate normal prior, $\beta \sim N_{18}(0, 100 I_{18})$. We use the \texttt{MCMCpack} library in \texttt{R} to sample from the posterior distribution which runs a random walk Metropolis-Hastings sampler with a normal proposal distribution and consider estimating the posterior mean.


%

Figure~\ref{fig:pro.boxplots} shows a running plot of $\widehat{\text{ESS}}/n$ estimated with lugsail BM estimators.  Although the true value of $\widehat{\text{ESS}}/n$ is not known in this case, the running plot mimics the previous example. Specifically, all four estimators seem to be converging to the true quantity, but the over lugsail seems to be converging more safely from below as opposed to the riskier convergence from above. This feature has a direct impact on the quality of inference. The methods converging from above would terminate the Markov chain earlier, yielding a false sense of security about the quality of estimation of the posterior mean.
\begin{figure}[htb]
	\centering
	\includegraphics[width = 2.5in]{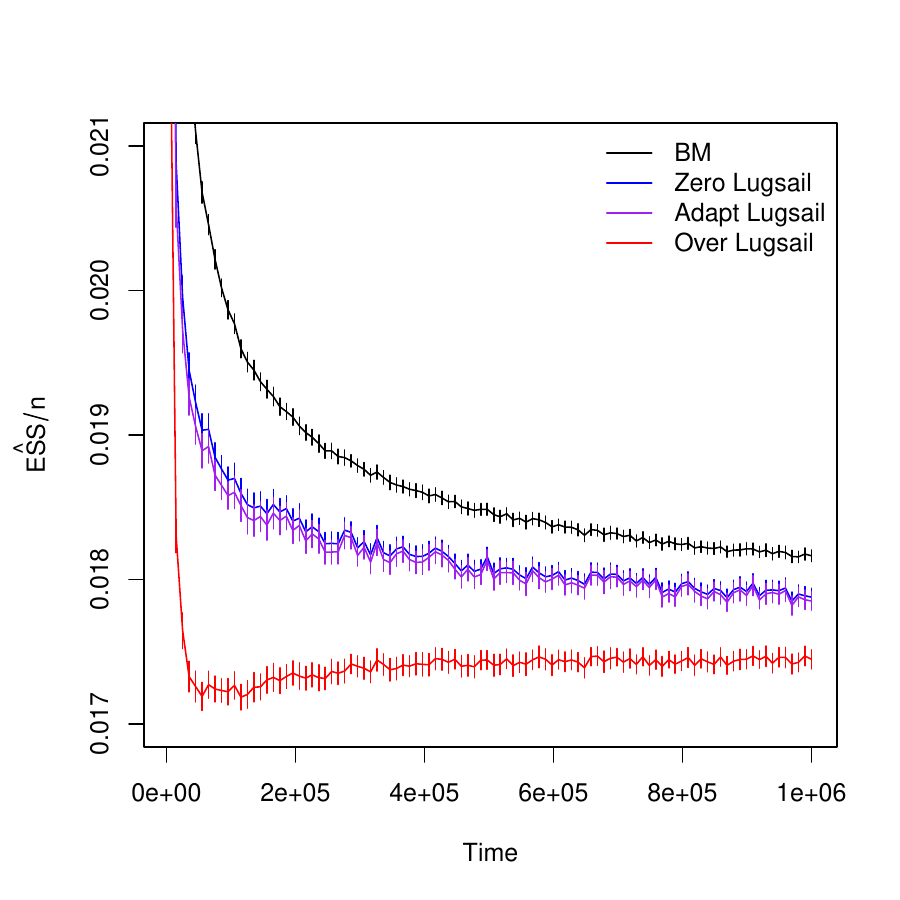}
	\caption{Logistic: Running plot of $\widehat{\text{ESS}}/n$ (with standard errors).}
	\label{fig:pro.boxplots}
\end{figure}

\section{Discussion} 
\label{sec:discussion}

We consider lugsail versions of the Bartlett, TH, and QS lag windows, but lugsail versions of other lag windows are readily available \cite[see][for an incomplete list]{ande:1971}. Lugsail lag windows can also be combined with other bias adjusting methods from \cite{kief:vogel:2002a}. We focus our attention on the family of consistent nonparametric estimators, however, see e.g.\ \cite{haan:1996,haan:levin:2000,mull:2007,mull:2014} for discussion on inconsistent and parametric estimators. 


For fixed $b$, our theoretical results imply a partial increase in the variance of the lugsail estimators, yielding an increase in the mean-squared error. However, there is no reason for lugsail and non-lugsail estimators to use the same $b$. Empirically, we find that lugsail estimators typically require a smaller $b$, implying a decrease in variability of the estimator. Indeed finding optimal choices of $b$ is a rich avenue for future work.

We believe this is the first instance of a lag window that takes values above 1. Our choice of method for creating these lugsail family of lag windows is motivated by the specific applications considered here.  Lag windows are also useful in signal-processing for spectral analysis and the estimation of instantaneous frequency \citep{boashash1992estimating}, and in genomic signal-processing \citep{gunawan2008optimal}. Our results  indicate that  application-specific developments of other lag windows taking values above 1 could yield similar finite-sample improvements. To facilitate future work in this area, reproducible codes for all examples and plots are available at \texttt{https://github.com/dvats/LugsailPaperCode}.

\section*{Acknowledgements} 
\label{sec:acknowledgements}
 The authors thank Daniel Eck, Karl Oskar Ekvall, and Galin Jones for critical feedback that improved the quality of presentation. Dootika Vats is supported by DST-SERB grant SPG/2021/001322.


\appendix
\section*{Appendix} 
\label{sec:appendix}


\section{Lag window calculations} 
\label{sec:different_lag_windows}
We present the bias  and variance expressions for the lugsail versions of spectral variance (SV) estimators.
\begin{enumerate}[(a)]
  \item \textit{Bartlett}. \cite{hannan:1970} shows that  $q = 1$ and $k_1 = 1$, so that the bias here is $\Gamma/b$ and $\int_{-\infty}^{\infty} k^2(x) = 2/3$. The first-order bias term for the lugsail Bartlett estimator is
  \[
  \dfrac{\Gamma}{b} \left( \dfrac{1 - rc_n}{1-c_n} \right)\,.
  \]
  Further,
  \[
  \int_{-\infty}^{\infty}k_L^2(x) = \dfrac{2}{3(1-c_n)^2}\left( 1 + \dfrac{c_n^2}{r}  - \dfrac{3c_n}{r} + \dfrac{c_n}{r^2}\right)\,.
  \]
  \item \textit{TH}. \cite{hannan:1970} shows that $q = 2$ and $k_q = \pi^2/4$. The TH estimator has a first-order bias term of $\Gamma^{(2)} \pi^2/(4 b^2) $. The lugsail TH estimator has bias
  \[
   \dfrac{\pi^2}{4} \dfrac{\Gamma^{(2)}}{b^2} \left( \dfrac{1 - c_nr^2}{1-c_n} \right)\,.
  \]
  For $r = 1$, $\int_{-\infty}^{\infty}k_L^2(x) = 3/4$ and for $r > 1$
  \[
  \int_{-\infty}^{\infty}k_L^2(x) = \dfrac{3}{4(1-c_n)^2}\left[ 1 + \dfrac{c_n^2}{r}  - \dfrac{4c_n}{3} \left(\dfrac{r^3 \sin (\pi/r) + \pi r^2 - \pi}{\pi r^3 - \pi r} \right)\right]\,.
  \]

  \item \textit{QS}. \cite{andr:1991} shows that $q = 2$ and $k_q = 1.4212$. The QS estimator has a first-order bias term of $\Gamma^{(2)} 1.4212/b^2 $, The lugsail QS  estimator has bias
  \[
   1.4212 \dfrac{\Gamma^{(2)}}{b^2} \left( \dfrac{1 - c_nr^2}{1-c_n} \right)\,.
  \]
  For $r = 1$, $\int_{\infty}^{\infty} k_L(x) = 1$, and for $r > 1$
  \[
  \int_{-\infty}^{\infty}k_L^2(x) = \dfrac{1}{(1-c_n)^2}\left\{ 1 + \dfrac{c_n^2}{r}  -  c_n\left[ \dfrac{1}{4r^3} \left((r-1)^3 (1 + 3r + r^2)  - (r+1)^3 (1-3r + r^2) \right) \right] \right\}\,.
  \]
\end{enumerate}

Table~\ref{tab:SV_vars} presents the variances of our recommended choices of lugsail parameter values for the three lag windows considered here. 
\begin{table}[htbp]
  \caption{Variances of SV estimators for lugsail parameters (rounded to 3 significant units). Here $a = b/n$.}
  \label{tab:SV_vars}
  \centering
 \begin{tabular}{|l|ccccc|}
  \hline
 Window & Original & Zero & Adapt  & Adapt  & Over\\
  & & & $(a=10^2)$ & $(a=10^5)$ & \\
  \hline
BT  & 0.667  & 1.333  & 1.522  & 1.407  & 1.704 \\ 
TH  & 0.750  & 0.964  & 1.017  & 0.985  & 0.986 \\ 
QS  & 1.000  & 1.306  & 1.381  & 1.335  & 1.329 \\ 
  \hline
  \end{tabular}
\end{table}

\section{Proof of Theorem~\ref{thm:bias_L}}
\label{sec:bias_proof}

Let $\alpha_g(n)$ be the mixing coefficient of $\{Y_t\}$ and let $\alpha(n)$ be the mixing coefficient of $\{X_t\}$. By \cite{doss:fleg:jon:2014}, $\alpha_g(n) \leq \alpha(n)$. Due to \citet[Theorem 17.2.2]{ibra:linn:1971} there exists $W <\infty$ which depends on the moments of $Y^{(i)}$ and $Y^{(j)}$ such that,
\[
\left|\text{Cov}_F\left(Y_1^{(i)}, Y_{1+s}^{(j)}\right)\right| \leq W \alpha_g(s)^{\delta/(2+\delta)} \leq W' s^{-(2+\epsilon)} \,.
\]
First, we show that under Assumption~\ref{ass:alpha_mixing} with $q = 1$, $|\Gamma_{ij}| < \infty$. Consider
\begin{align*}
|\Gamma_{ij}| & = \left|\sum_{s=1}^{\infty} s\left[\text{Cov}_F\left(Y_1^{(i)}, Y_{1+s}^{(j)}\right) + \text{Cov}_F\left(Y_1^{(j)}, Y_{1+s}^{(i)}\right) \right] \right|\\ 
& \leq \sum_{s=1}^{\infty} s\left[\left|\text{Cov}_F\left(Y_1^{(i)}, Y_{1+s}^{(j)} \right) \right| +  \left|\text{Cov}_F \left(Y_1^{(j)}, Y_{1+s}^{(i)} \right) \right| \right] \\ 
& \leq \sum_{s=1}^{\infty} 2W's \cdot s^{-2-\epsilon} < \infty\,.
\end{align*}
Let $\left[\text{Var}_F{(\bar{Y})}\right]_{ij}$ be the $ij$th element of $\Var_F{(\bar{Y})}$. By \citet[Lemma 1-3]{song:schm:1995}, 
\begin{equation}
\label{eq:nVar}
  n\left[\Var_F{(\bar{Y})}\right]_{ij}  = \Sigma_{ij} + \dfrac{\Gamma_{ij}}{n} + o(n^{-1})\,.
\end{equation}
Let $\hat{\Sigma}^{ij}_b$ be the $ij$th element of $\hat{\Sigma}_b$. Using \eqref{eq:nVar},
  \begin{align*}
    \E_F\left[ \hat{\Sigma}_b^{ij} \right] & = \E_F \left[\dfrac{b}{a-1} \sum_{l=0}^{a-1} \left(\bar{Y}^{(i)}_l(b) - \bar{Y}^{(i)} \right) \left(\bar{Y}^{(j)}_l(b) - \bar{Y}^{(j)} \right)  \right]\\ 
  & = \dfrac{ab}{a-1} \left(\text{Cov}_F \left(\bar{Y}_1(b)^{(i)}, \bar{Y}_1(b)^{(j)} \right)- \text{Cov}_F \left(\bar{Y}^{(i)}, \bar{Y}^{(j)} \right)\right)\\
  & = \dfrac{ab}{a-1} \left(\Sigma_{ij}\dfrac{a-1}{ab} + \Gamma_{ij} \dfrac{(a-1)(a+1)}{a^2b^2}  + o \left(b^{-1} \right) \right)\\
  & = \Sigma_{ij} + \dfrac{\Gamma_{ij}}{b} + o \left(\dfrac{1}{b} \right)\,.
  \end{align*}

\section{Proof of Theorem~\ref{thm:variance}}
\label{sec:proof_of_multivariate_bm}
\begin{proposition}
\label{prop:exp_product_squares}
  If $(X,Y)$ is a mean 0 bivariate normal random variable such that
  \[ 
  \left[\begin{array}{c} X \\ Y\end{array} \right] \sim N \left(\left[\begin{array}{c} 0 \\ 0\end{array} \right] \,,\, \left[\begin{array}{cc} l_{11} & l_{12} \\ l_{12} & l_{22} \end{array} \right]   \right)\,, 
  \] 
  then $\E[X^2Y^2] = 2l_{12}^2 + l_{11} l_{22}$. 
\end{proposition}

\smallskip
\begin{proposition}
\label{prop:4th_product_expectation}
\citep{jans:stoi:1988} If $(X_1, X_2, X_3,X_4)$ is mean 0 normally distributed, then $\E[X_1X_2X_3X_4] = \E[X_1X_2]\E[X_3X_4]$ $+ \E[X_1X_3]\E[X_2X_4] + \E[X_1X_4]\E[X_2X_3]$.
\end{proposition}

\smallskip
Recall that $B(t)$ is a $p$-dimensional standard Brownian motion. Let $B^{(i)}(t)$ denote the $i$th component of the vector $B(t)$ and let $\bar{B} = n^{-1}B(n)$, and $\bar{B}_l(s) = s^{-1}[B(ls+s) - B(ls)]$. Recall $\Sigma = LL^T$ where $L$ is the lower triangular matrix in \eqref{eq:SIP}. Define the $p$-dimensional scaled Brownian motion $C(t) = LB(t)$ and let $C^{(i)}(t)$ be the $i$th component of $C(t)$. In addition, define $\bar{C}^{(i)} = n^{-1} C^{(i)}(n)$ and $\bar{C}^{(i)}_l(s) = s^{-1}[C^{(i)}(ls+s) - C^{(i)}(ls)]$. 

Consider the Brownian motion equivalent of the lugsail BM estimator,
\begin{align*}
\tilde{\Sigma}_{L} & = \dfrac{1}{1-c_n} \dfrac{b}{a - 1} \ds \sum_{l=0}^{a - 1} (\bar{B}_l(b) - \bar{B})(\bar{B}_l(b) - \bar{B})^T \\
& \quad  - \dfrac{c_n}{1-c_n} \dfrac{b/r}{ra - 1} \ds \sum_{l=0}^{ra - 1} (\bar{B}_l(b/r) - \bar{B})(\bar{B}_l(b/r) - \bar{B})^T\,. 
\end{align*}
We will show that the variance of the lugsail BM estimator is the same as the variance of the Brownian motion equivalent. The following lemma will be needed later in the proof.
\begin{lemma}
\label{lem:tilde_hat_equi_moment}
Suppose Assumption~\ref{ass:b_increasing} holds and the assumptions of Theorem~\ref{thm:kuelbs} holds with $D$ such that $\E D^4 <\infty$ and $\psi(n)^2 b^{-1} \log n  \to 0$ as $n \to \infty$, then as $n \to \infty$,
\[
 \E \left[ \left(\hat{\Sigma}^{ij}_L - \Sigma_{ij}\tilde{\Sigma}^{ij}_{L}  \right)^2 \right] \to 0\,.
\]
\end{lemma}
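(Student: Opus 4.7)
The plan is to reduce the claim to a single-batch-size statement via the strong invariance principle and then combine the two batch sizes by the $L^{2}$ triangle inequality. Since $\hat{\Sigma}_{L}$ and $\tilde{\Sigma}_{L}$ are, by construction, the same linear combination (with coefficients $1/(1-c)$ and $-c/(1-c)$) of weighted BM estimators at batch sizes $b$ and $\lfloor b/r\rfloor$, it suffices to establish, for a generic batch size $m\in\{b,\lfloor b/r\rfloor\}$ satisfying Assumption~\ref{ass:b_increasing},
\[
\E\!\left[\bigl(\hat{\Sigma}_{m}^{ij} - \Sigma_{ij}\tilde{\Sigma}_{m}^{ij}\bigr)^{2}\right] \longrightarrow 0,
\]
noting that the rate condition $\psi(n)^{2}\log n/m\to 0$ is preserved when $m$ is replaced by $\lfloor m/r\rfloor$ up to the constant $r$.

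For this single-batch reduction, I would invoke the strong invariance principle of Theorem~\ref{thm:kuelbs} to decompose, for each batch index $l=0,\ldots,a_{m}-1$,
\[
\bar{Y}_{l}(m) - \bar{Y} = \bigl(\bar{C}_{l}(m) - \bar{C}\bigr) + \epsilon_{l}, \qquad \epsilon_{l} := \tfrac{1}{m}\bigl[R((l+1)m)-R(lm)\bigr] - \tfrac{1}{n}R(n),
\]
where $C(t)=LB(t)$ and $\|\epsilon_{l}\|\le 4D\psi(n)/m$ uniformly in $l$ (absorbing the $n_{0}$ threshold into $D$). Expanding entrywise gives
\[
(\bar{Y}_{l}^{(i)}-\bar{Y}^{(i)})(\bar{Y}_{l}^{(j)}-\bar{Y}^{(j)}) = (\bar{C}_{l}^{(i)}-\bar{C}^{(i)})(\bar{C}_{l}^{(j)}-\bar{C}^{(j)}) + \mathrm{cross}_{l} + \mathrm{err}_{l},
\]
with $\mathrm{cross}_{l}=(\bar{C}_{l}^{(i)}-\bar{C}^{(i)})\epsilon_{l}^{(j)}+\epsilon_{l}^{(i)}(\bar{C}_{l}^{(j)}-\bar{C}^{(j)})$ and $\mathrm{err}_{l}=\epsilon_{l}^{(i)}\epsilon_{l}^{(j)}$. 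Summing against $m/(a_{m}-1)$ and matching the leading pure-Brownian piece against the $(i,j)$-entry of $\tilde{\Sigma}_{m}$ (via the scaling $C=LB$, which couples it to $\Sigma_{ij}\tilde{\Sigma}_{m}^{ij}$) reduces the task to $L^{2}$ control of the cross and error contributions.

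The error contribution is direct: $|\mathrm{err}_{l}|\le 16 D^{2}\psi(n)^{2}/m^{2}$ gives $\bigl|m(a_{m}-1)^{-1}\sum_{l}\mathrm{err}_{l}\bigr| = O(D^{2}\psi(n)^{2}/m)$ pointwise, and squaring with $\E D^{4}<\infty$ yields an $L^{2}$ bound of order $\psi(n)^{4}/m^{2}=o(1)$. The cross contribution is the delicate step, since a naive Cauchy--Schwarz over the $a_{m}\asymp n/m$ batches would lose too much. Instead I would bound
\[
\Bigl|\sum_{l}(\bar{C}_{l}^{(i)}-\bar{C}^{(i)})\epsilon_{l}^{(j)}\Bigr| \le \bigl(\max_{l}|\bar{C}_{l}^{(i)}-\bar{C}^{(i)}|\bigr)\sum_{l}|\epsilon_{l}^{(j)}|,
\]
using the Gaussian maximal inequality $\E\bigl[\max_{l}|\bar{C}_{l}^{(i)}-\bar{C}^{(i)}|^{4}\bigr]=O\bigl((\log n)^{2}/m^{2}\bigr)$ together with the deterministic bound $\sum_{l}|\epsilon_{l}^{(j)}|\le 4D\psi(n)a_{m}/m$. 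Multiplying by $m/(a_{m}-1)$, squaring, and invoking Cauchy--Schwarz with $\E D^{4}<\infty$ to decouple the Gaussian maximum from $D$ produces an $L^{2}$ bound of order $\psi(n)^{2}\log n/m\to 0$, which is exactly the standing hypothesis.

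The main obstacle is precisely this cross-term estimate: it is what forces both the $\log n$ factor in the rate hypothesis and the upgrade from $\E D^{2}<\infty$ (which would suffice for the pure-Brownian variance calculations alone) to $\E D^{4}<\infty$. Once the single-batch claim holds for $m=b$ and $m=\lfloor b/r\rfloor$, the $L^{2}$ triangle inequality applied to the definitions of $\hat{\Sigma}_{L}$ in \eqref{eq:lugsail_bm_multi} and of $\tilde{\Sigma}_{L}$ delivers the lemma.
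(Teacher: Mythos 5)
Your argument is correct, and it reaches the conclusion by a somewhat different route than the paper. The paper's proof first reduces, exactly as you do, to the two single-batch-size differences via the triangle inequality, but it then \emph{cites} the prior result of Vats, Flegal and Jones for the almost sure convergence $|\hat{\Sigma}^{ij}_b - \Sigma_{ij}\tilde{\Sigma}^{ij}_b| \to 0$ together with the explicit pointwise bound $|\hat{\Sigma}^{ij}_b - \Sigma_{ij}\tilde{\Sigma}^{ij}_b| \le D^2 g_1(b,n) + D g_2(b,n)$, where $g_1 \asymp \psi(n)^2 \log n / b$ and $g_2 \asymp (\psi(n)^2 \log n / b)^{1/2}$, and finally upgrades almost sure convergence to $L^2$ convergence via a generalized dominated convergence theorem using $\E D^4 < \infty$. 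You instead re-derive the single-batch estimate from the strong invariance principle: the decomposition into a pure-Brownian term, a cross term, and a remainder term, with the remainder controlled by $D^2\psi(n)^2/m$ and the cross term by a Gaussian maximal inequality over the $a_m$ batch means paired with the deterministic bound $\sum_l |\epsilon_l^{(j)}| \le 4 D \psi(n) a_m / m$. This is essentially how the cited bound is obtained in the source (your $(\log n/m)^{1/2}$ factor from the maximal inequality is precisely where the paper's $g_2$ comes from), so the technical core is the same; what your version buys is a self-contained, purely second-moment argument that yields explicit $o(1)$ rates of order $\psi(n)^2\log n/b$ for $\E[(\hat{\Sigma}^{ij}_m - \Sigma_{ij}\tilde{\Sigma}^{ij}_m)^2]$ and dispenses with the almost-sure-convergence-plus-domination step entirely, at the cost of redoing work the paper delegates to a reference. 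Your identification of the cross term as the source of both the $\log n$ factor in the rate hypothesis and the need for $\E D^4 < \infty$ matches the structure of the paper's bound.
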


\begin{proof}
Under the same conditions, \cite{vats:fleg:jones:2017b} showed that, as $n\to \infty$
\[
\left|\Sigma^{ij}_b - \Sigma_{ij}\tilde{\Sigma}^{ij}_{b}  \right| \overset{a.s.}{\to} 0.
\]
Note that,
\begin{align*}
\label{eq:lug_abs_diff_0}
\left|\hat{\Sigma}^{ij}_L - \Sigma_{ij}\tilde{\Sigma}^{ij}_{L}  \right| & = \left|\dfrac{1}{1-c_n} \left(\Sigma^{ij}_b - \Sigma_{ij}\tilde{\Sigma}^{ij}_{b} \right) - \dfrac{c_n}{1-c_n} \left(\Sigma^{ij}_{b/r} - \Sigma_{ij}\tilde{\Sigma}^{ij}_{b/r} \right) \right| \\ 
& \leq \left|\dfrac{1}{1-c_n} \left(\Sigma^{ij}_b - \Sigma_{ij}\tilde{\Sigma}^{ij}_{b} \right) \right| + \left| \dfrac{c_n}{1-c_n} \left(\Sigma^{ij}_{b/r} - \Sigma_{ij}\tilde{\Sigma}^{ij}_{b/r} \right) \right| \\ \numberthis
& \to 0 \text{ with probability 1 as } n \to \infty\,.
\end{align*}
Define,
\[
g_1(b,n) = 4 \dfrac{a}{a-1} \dfrac{\log n}{b} \psi(n)^2, \quad \text{ and for some }  \epsilon > 0
\]
\[
g_2(b,n) = 4 \sqrt{\Sigma_{ii}} (1+\epsilon) \dfrac{a}{a-1} \left( \dfrac{\log n}{b} \psi(n)^2 \right)^{1/2} \,.
\]
If $\psi(n)^2 b^{-1} \log n  \to 0$ as $n \to \infty$, then $g_1(b,n) \to 0$ and $g_2(b,n) \to 0$ as $n \to \infty$. 
From \cite{vats:fleg:jones:2017b}, for the $(i,j)$the element of $\Sigma_b$,
\begin{equation*}
\label{eq:g1g2}
|\Sigma^{ij}_b  - \Sigma_{ij}\tilde{\Sigma}_b^{ij}| \leq D^2 g_1(n) + Dg_2(n)\,,
\end{equation*}
where $D$ is the random variable in the conclusion of Theorem~\ref{thm:kuelbs}. For the lugsail BM estimator,
\begin{align*}
|\hat{\Sigma}^{ij}_L  - \Sigma_{ij} \tilde{\Sigma}_L^{ij} | &\leq \left|\dfrac{1}{1-c_n} \left(\Sigma^{ij}_b - \Sigma_{ij}\tilde{\Sigma}^{ij}_{b} \right) \right| + \left| \dfrac{c_n}{1-c_n} \left(\Sigma^{ij}_{b/r} - \Sigma_{ij}\tilde{\Sigma}^{ij}_{b/r} \right) \right| \\ 
& \leq \dfrac{1}{1-c_n} \left(D^2 g_1(b,n) + Dg_2(b,n) \right) + \dfrac{c_n}{1-c_n} \left(D^2 g_1(b/r,n) + Dg_2(b/r, n) \right)\\
& \leq D^2 g_1^*(n) + Dg_2^*(n)\,
\end{align*}
where
\[
g_1^*(n) = \dfrac{1}{1-c_n} \left(g_1(b,n) +  g_1(b/r, n) \right) \quad \text{ and } \quad 
 g_2^*(n) = \dfrac{1}{1-c_n} \left(g_2(b,n) +  g_2(b/r, n) \right)\,.
\]
If $\psi(n)^2 b^{-1} \log n  \to 0$ then $g_1^*(n) \to 0$, $g_2^*(n) \to 0$.  %
By \eqref{eq:lug_abs_diff_0}, there exists integer $N_0$ such that
\begin{align*}
\left(\hat{\Sigma}^{ij}_L  - \Sigma_{ij}\tilde{\Sigma}_L^{ij} \right)^2 & = \left(\hat{\Sigma}^{ij}_L  - \Sigma_{ij}\tilde{\Sigma}_L^{ij} \right)^2 I(0 \leq n \leq N_0) + \left(\hat{\Sigma}^{ij}_L  - \Sigma_{ij}\tilde{\Sigma}_L^{ij}  \right)^2 I(N_0 < n) \\
& \leq \left(\hat{\Sigma}^{ij}_L  - \Sigma_{ij}\tilde{\Sigma}_L^{ij} \right)^2 I(0 \leq n \leq N_0) +  \left( D^2 g_1^*(n) + Dg_2^*(n) \right)^2I(N_0 < n) \\ 
& = g_n^*(Y_1, \dots, Y_n, B(0), \dots, B(n))\,.
\end{align*}
Since $\E D^4 <\infty, \E \left(\hat{\Sigma}^{ij}_L\right)^2 < \infty$, and $\E \left(\tilde{\Sigma}^{ij}_L \right)^2 < \infty$,
\begin{align*}
  \E \left| g_n^*(Y_1, \dots, Y_n, B(0), \dots, B(n)) \right| &\leq \E \left(\hat{\Sigma}^{ij}_L  - \Sigma_{ij}\tilde{\Sigma}_L^{ij} \right)^2 + \E \left(D^2g_1^*(n) + D g_2^*(n) \right)^2 <\infty
\end{align*}
Thus, $\E|g_n| <\infty$, $g_n \to 0$ as $n\to \infty$ with probability 1, and $\E g_n \to 0$ as $n \to \infty$. Since $\left(\hat{\Sigma}^{ij}_L  - \Sigma_{ij}\tilde{\Sigma}_L^{ij} \right)^2  \to 0$ with probability 1, the generalized majorized convergence theorem \citep{zeidler2013nonlinear} yields 
\[
 \E \left[ \left(\hat{\Sigma}^{ij}_L - \Sigma_{ij}\tilde{\Sigma}^{ij}_{L}  \right)^2 \right] \to 0\,.
\]
\end{proof}

\begin{proof}[of Theorem~\ref{thm:variance}]
We will prove Theorem~\ref{thm:variance} by first finding the variance of $\tilde{\Sigma}_{L,L}$ and then using Lemma~\ref{lem:tilde_hat_equi_moment} show that this is equal to the variance of the lugsail BM estimator. Let $\tilde{\Sigma}_{L,L} = L \tilde{\Sigma}_L L^T$, so that
\begin{align*}
 \tilde{\Sigma}_{L,L} 
& = \dfrac{1}{1-c_n} \dfrac{b}{a - 1} \ds \sum_{l=0}^{a - 1} (\bar{C}_l(b) - \bar{C})(\bar{C}_l(b) - \bar{C})^T \\
& \quad - \dfrac{c_n}{1-c_n} \dfrac{b/r}{ra - 1} \ds \sum_{l=0}^{ra - 1} (\bar{C}_l(b/r) - \bar{C})(\bar{C}_l(b/r) - \bar{C})^T\,.  
\end{align*}
First we establish some useful identities. Let $U_t^{(i)} = B^{(i)}(t) - B^{(i)}(t-1)$, then $U^{(i)}_t \overset{iid}{\sim} N(0, 1)$ for $t = 1, 2, \dots$. In addition, for any batch $l$, and index $s$
\[
\bar{B}_l^{(i)}(s) - \bar{B} = \left(\dfrac{n-s}{ns} \right)\sum_{t = l}^{l+s} U_t^{(i)} - \dfrac{1}{n}\sum_{t=1}^{l} U^{(i)}_t - \dfrac{1}{n} \sum_{t = l+s+1}^{n} U^{(i)}_t\,.
\]
Since $\E[\bar{B}^{(i)}_l(s) - \bar{B}^{(i)}] = 0$ for $l = 0, \dots, n-s$,
\[
\text{Var}[\bar{B}^{(i)}_l(s) - \bar{B}^{(i)}] = \left( \dfrac{n-s}{ns} \right)^2s + \dfrac{n-s}{n^2} = \dfrac{n-s}{ns}\,.
\]
Since $B$ is a $p$-dimensional standard Brownian motion and $C(t) = LB(t)$,
\begin{equation}
  \label{eq:centered_C_dist}
 \bar{C}_l(s) - \bar{C} \sim N \left(0, \dfrac{n-s}{sn} \Sigma \right)\,.
\end{equation} 
Also for $s \geq b$ in \citet[Equation 19]{liu:vats:fle:2018},
\begin{equation}
  \label{eq:cov_C_lags}
\text{Cov}(\bar{C}_{l}(b) - \bar{C}, \bar{C}_{l+s}(b) - \bar{C}) = -\dfrac{\Sigma}{n}\,.
\end{equation}
In addition, for $p = 0, \dots, a-1$, and $q = rp, rp+1, \dots, r(p+1) - 1$,
\begin{equation}
  \label{eq:Cov_C_overlap}
  \text{Cov}\left(\bar{C}_p(b) - \bar{C}, \bar{C}_q(b/r) - \bar{C}  \right)  = \dfrac{n-b}{bn} \Sigma\,\,,
\end{equation}
and for $p = 0, \dots, a-1$ and $q \ne rp, rp+1, \dots, r(p+1) - 1$
\begin{equation}
  \label{eq:Cov_C_no_over}
  \text{Cov}\left(\bar{C}_p(b) - \bar{C}, \bar{C}_q(b/r) - \bar{C} \right)  =  -\dfrac{\Sigma}{n}\,.
\end{equation}
We will consider the variance of each individual term of $\tilde{\Sigma}_{L,L}$. Note that $\Var\left[\tilde{\Sigma}_{L,L}^{ij}\right] = \E\left[\tilde{\Sigma}_{L,L}^{2,ij} \right] - \left(\E[\tilde{\Sigma}_{L,L}^{ij}] \right)^2$. Consider,
\begin{align*}
\E\left[\tilde{\Sigma}_{L,L}^{2,ij} \right] & = \E \Bigg[\Bigg( \dfrac{1}{1-c_n} \dfrac{b}{a - 1} \ds \sum_{l=0}^{a - 1} \left(\bar{C}^{(i)}_l(b) - \bar{C}^{(i)} \right) \left(\bar{C}^{(j)}_l(b) - \bar{C}^{(j)} \right)\\ 
& \quad  - \dfrac{c_n}{1-c_n} \dfrac{b/r}{ra - 1} \ds \sum_{l=0}^{ra - 1} \left(\bar{C}^{(i)}_l(b/r) - \bar{C}^{(i)} \right) \left(\bar{C}^{(j)}_l(b/r) - \bar{C}^{(j)} \right)  \Bigg)^2\Bigg]\\
& = \E \Bigg[ \left( \dfrac{1}{1-c_n} \right)^2 \left( \dfrac{b}{a-1} \right)^2 \left(  \ds \sum_{l=0}^{a - 1} \left(\bar{C}^{(i)}_l(b) - \bar{C}^{(i)} \right)  \left(\bar{C}^{(j)}_l(b) - \bar{C}^{(j)} \right) \right)^2\\
&\quad +  \left( \dfrac{c_n}{1-c_n} \right)^2 \left(  \dfrac{b/r}{ra-1}\right)^2 \left( \ds \sum_{l=0}^{ra - 1} \left(\bar{C}^{(i)}_l(b/r) - \bar{C}^{(i)} \right)  \left(\bar{C}^{(j)}_l(b/r) - \bar{C}^{(j)} \right)\right)^2\\
& \quad  - \dfrac{2c_n}{(1-c_n)^2} \dfrac{b^2/r}{(a-1)(ra-1)}     \left(  \ds \sum_{l=0}^{a - 1} \left(\bar{C}^{(i)}_l(b) - \bar{C}^{(i)} \right)  \left(\bar{C}^{(j)}_l(b) - \bar{C}^{(j)} \right) \right) \\
& \qquad \times  \left( \ds \sum_{l=0}^{ra - 1} \left(\bar{C}^{(i)}_l(b/r) - \bar{C}^{(i)} \right)  \left(\bar{C}^{(j)}_l(b/r) - \bar{C}^{(j)} \right)  \right) \Bigg]\\
& = A_1 + A_2 + A_3\,, \numberthis \label{eq:4thmoment_first}
\end{align*}  
where
\begin{align*}
A_1 & = \E \Bigg[ \left( \dfrac{1}{1-c_n} \right)^2 \left( \dfrac{b}{a-1} \right)^2 \left(  \ds \sum_{l=0}^{a - 1} \left(\bar{C}^{(i)}_l(b) - \bar{C}^{(i)}\right)  \left(\bar{C}^{(j)}_l(b) - \bar{C}^{(j)} \right) \right)^2 \Bigg]\,,\\
A_2 & =   \E \Bigg[  \left( \dfrac{c_n}{1-c_n} \right)^2 \left(  \dfrac{b/r}{ra-1}\right)^2 \left( \ds \sum_{l=0}^{ra - 1} \left(\bar{C}^{(i)}_l(b/r) - \bar{C}^{(i)} \right)  \left(\bar{C}^{(j)}_l(b/r) - \bar{C}^{(j)} \right)\right)^2 \Bigg]\,, \text{ and }\\
A_3 & = \E \Bigg[ - \dfrac{2c_n}{(1-c_n)^2} \dfrac{b^2/r}{(a-1)(ra-1)}     \left(  \ds \sum_{l=0}^{a - 1} \left(\bar{C}^{(i)}_l(b) - \bar{C}^{(i)} \right)  \left(\bar{C}^{(j)}_l(b) - \bar{C}^{(j)} \right) \right) \\
& \qquad \times  \left( \ds \sum_{l=0}^{ra - 1} \left(\bar{C}^{(i)}_l(b/r) - \bar{C}^{(i)} \right)  \left(\bar{C}^{(j)}_l(b/r) - \bar{C}^{(j)} \right)\right) \Bigg]\,.
\end{align*}
Consider $A_1$,
\begin{align*}
A_1 
& = \left( \dfrac{1}{1-c_n} \right)^2 \left( \dfrac{b}{a-1} \right)^2 \E \Bigg[ \ds \sum_{l=0}^{a - 1} \left(\bar{C}^{(i)}_l(b) - \bar{C}^{(i)} \right)^2  \left(\bar{C}^{(j)}_l(b) - \bar{C}^{(j)} \right)^2    + \\
& \quad + 2 \sum_{s=1}^{a-1} \sum_{l=0}^{a-1-s} \left(\bar{C}^{(i)}_l(b) - \bar{C}^{(i)} \right)  \left(\bar{C}^{(j)}_l(b) - \bar{C}^{(j)} \right)   \left(\bar{C}^{(i)}_{l+s}(b) - \bar{C}^{(i)} \right)  \left(\bar{C}^{(j)}_{l+s}(b) - \bar{C}^{(j)} \right)  \Bigg]\\
& = \left( \dfrac{1}{1-c_n} \right)^2 \left( \dfrac{b}{a-1} \right)^2  \E[a_1 + 2a_2]\,, \numberthis \label{eq:A1_lugsail_bm}
\end{align*}
where 
\begin{align*}
  a_1 & = \ds \sum_{l=0}^{a - 1} \left(\bar{C}^{(i)}_l(b) - \bar{C}^{(i)} \right)^2  \left(\bar{C}^{(j)}_l(b) - \bar{C}^{(j)} \right)^2\,,\\
  a_2 & = \sum_{s=1}^{a-1} \sum_{l=0}^{a-1-s} \left(\bar{C}^{(i)}_l(b) - \bar{C}^{(i)} \right)  \left(\bar{C}^{(j)}_l(b) - \bar{C}^{(j)} \right)  \left(\bar{C}^{(i)}_{l+s}(b/r) - \bar{C}^{(i)} \right)  \left(\bar{C}^{(j)}_{l+s}(b/r) - \bar{C}^{(j)} \right)\,.
\end{align*}
We first consider the $a_1$ term. Using Proposition~\ref{prop:exp_product_squares} and \eqref{eq:centered_C_dist} with $k = b$, 
\begin{align*}
 \E \left[ \left(\bar{C}^{(i)}_l(b) - \bar{C}^{(i)} \right)^2  \left(\bar{C}^{(j)}_l(b) - \bar{C}^{(j)} \right)^2 \right] = \left( \dfrac{n-b}{bn}\right)^2[2\Sigma^2_{ij} + \Sigma_{ii} \Sigma_{jj}]\,.\numberthis \label{eq:Cijsquare}
\end{align*}
Using \eqref{eq:Cijsquare} for $\E[a_1]$,
\begin{align*}
\E[a_1] 
& = a  \left[\left( \dfrac{n-b}{bn}\right)^2[2 \Sigma^2_{ij} + \Sigma_{ij} \Sigma_{jj}] \right]\\
& = a  \left[\left(  \dfrac{1}{b^2} + \dfrac{1}{n^2} - \dfrac{2}{bn}\right) \left[2 \Sigma^2_{ij} + \Sigma_{ii} \Sigma_{jj} \right] \right] = \dfrac{a}{b^2} [2 \Sigma^2_{ij} + \Sigma_{ii} \Sigma_{jj}] + o \left( \dfrac{a}{b^2} \right)\,. \numberthis \label{eq:a_1_lugsail_bm}
\end{align*}
For the term $a_2$, using Proposition~\ref{prop:4th_product_expectation}, \eqref{eq:centered_C_dist}, and \eqref{eq:cov_C_lags} with $s = b$,
\begin{align*}
& \E \left[   \left(\bar{C}^{(i)}_l(b) - \bar{C}^{(i)} \right)  \left(\bar{C}^{(j)}_l(b) - \bar{C}^{(j)} \right)   \left(\bar{C}^{(i)}_{l+s}(b) - \bar{C}^{(i)} \right)  \left(\bar{C}^{(j)}_{l+s}(b) - \bar{C}^{(j)} \right)  \right]\\
& = \left[ \left( \dfrac{n-b}{bn} \right)^2 \Sigma^2_{ij}  + \dfrac{\Sigma_{ii} \Sigma_{jj} }{n^2}  + \dfrac{\Sigma^2_{ij} }{n^2} \right] =  \left[ \dfrac{\Sigma_{ij}^2}{b^2} + \dfrac{1}{n^2}[2 \Sigma_{ij}^2 + \Sigma_{ii} \Sigma_{jj}] - \dfrac{2\Sigma_{ij}^2}{bn}  \right]\,. \numberthis \label{eq:4productsCij}
\end{align*}
Using \eqref{eq:4productsCij} in calculating $\E[a_2]$,
\begin{align*}
  \E[a_2] & = \sum_{s=1}^{a-1} \sum_{l=0}^{a-1-s} \left(  \dfrac{\Sigma_{ij}^2}{b^2} + \dfrac{1}{n^2}[2 \Sigma_{ij}^2 + \Sigma_{ii} \Sigma_{jj}] - \dfrac{2\Sigma_{ij}^2}{bn}   \right)\\
  & = \dfrac{a(a-1)}{2} \left(  \dfrac{\Sigma_{ij}^2}{b^2} + \dfrac{1}{n^2}[2 \Sigma_{ij}^2 + \Sigma_{ii} \Sigma_{jj}] - \dfrac{2\Sigma_{ij}^2}{bn}   \right)\\
  & = \dfrac{\Sigma^2_{ij}}{2} \left[ \dfrac{(a-1)(a - 2)}{b^2} \right] + o\left( \dfrac{a}{b^2} \right)\,. \numberthis \label{eq:a_2_lugsail_bm}
\end{align*}
Using \eqref{eq:a_1_lugsail_bm} and \eqref{eq:a_2_lugsail_bm} in \eqref{eq:A1_lugsail_bm},
\begin{align*}
  \E[A_1] & = \left( \dfrac{1}{1-c_n} \right)^2 \left( \dfrac{b}{a-1} \right)^2\E[a_1 + 2a_2]\\
  & = \left( \dfrac{1}{1-c_n} \right)^2 \left( \dfrac{b}{a-1} \right)^2 \left[ \dfrac{a}{b^2} [2 \Sigma^2_{ij} + \Sigma_{ii} \Sigma_{jj}] + \Sigma^2_{ij} \left[ \dfrac{(a-1)(a - 2)}{b^2} \right] + o\left( \dfrac{a}{b^2} \right) \right]\\
  & = \dfrac{1}{(1-c_n)^2} \dfrac{ a(a-1)\Sigma_{ij}^2 + a \Sigma_{ii} \Sigma_{jj} }{(a-1)^2}  + o\left( \dfrac{1}{a} \right)\,. \numberthis \label{eq:A1_final_bm}
\end{align*}
Similarly for $A_2$, 
\begin{align*}
  \E[A_2] & = \left(\dfrac{c_n}{1-c_n}\right)^2 \dfrac{ ra(ra-1)\Sigma_{ij}^2 + ra \Sigma_{ii} \Sigma_{jj} }{(ra-1)^2}  + o\left( \dfrac{1}{a} \right)\,. \numberthis \label{eq:A2_final_bm}
\end{align*}
We move on to the final term $A_3$,
\begin{align*}
& A_3 \\
 & = - \dfrac{2c_n}{(1-c_n)^2} \dfrac{b^2/r}{(a-1)(ra-1)}    \\
 & \, \times \E \left\{  \left[  \ds \sum_{p=0}^{a - 1}\ds \sum_{q=rp}^{r(p+1)-1} \left(\bar{C}^{(i)}_l(b) - \bar{C}^{(i)}\right) \left(\bar{C}^{(j)}_l(b) - \bar{C}^{(j)} \right)  \left(\bar{C}^{(i)}_l(b/r) - \bar{C}^{(i)} \right)  \left(\bar{C}^{(j)}_l(b/r) - \bar{C}^{(j)} \right) \right]  \right.\\
& \, \left. \times   \left(  \ds \sum_{p=0}^{a - 1}\ds \sum_{q\ne rp}^{r(p+1)-1} \left[\bar{C}^{(i)}_l(b) - \bar{C}^{(i)} \right) \left(\bar{C}^{(j)}_l(b) - \bar{C}^{(j)} \right)  \left(\bar{C}^{(i)}_l(b/r) - \bar{C}^{(i)} \right) \left(\bar{C}^{(j)}_l(b/r) - \bar{C}^{(j)} \right) \right] \right\}\,.
\end{align*}
For $p = 0, \dots, a-1$ and $q = rp, rp+1 , \dots, r(p+1) - 1$, let,
\[
OL = \left(\bar{C}_p^{(i)}(b) - \bar{C}^{(i)} \right) \left(\bar{C}_q^{(j)}(b) - \bar{C}^{(j)} \right) \left(\bar{C}_p^{(i)}(b/r) - \bar{C}^{(i)} \right)  \left(\bar{C}_q^{(i)}(b/r) - \bar{C}^{(i)} \right)\,.
\]
By Proposition~\ref{prop:4th_product_expectation}, \eqref{eq:centered_C_dist}, \eqref{eq:Cov_C_overlap}
\begin{align*}
  & \E\left[ \ds \sum_{p=0}^{a - 1}\ds \sum_{q=rp}^{r(p+1)-1}OL \right]\\
  & = ra\E\left[ \left(\bar{C}_p^{(i)}(b) - \bar{C}^{(i)} \right)  \left(\bar{C}_q^{(j)}(b) - \bar{C}^{(j)} \right) \left(\bar{C}_p^{(i)}(b/r) - \bar{C}^{(i)}\right)   \left(\bar{C}_q^{(i)}(b/r) - \bar{C}^{(i)} \right) \right]\\ 
  & = ra \left[\left( \dfrac{n-b}{bn} \right)^2[\Sigma_{ij}^2 + \Sigma_{ii} \Sigma_{jj}] + \left( \dfrac{n-b}{bn} \right) \left( \dfrac{rn-b}{bn} \right)\Sigma_{ij}^2 \right]\\
  & = \dfrac{r}{ab^2} \left[ (a-1)^2 [\Sigma_{ij}^2 + \Sigma_{ii} \Sigma_{jj}] + (a-1)(ra-1) \Sigma_{ij}^2  \right]\,. \numberthis \label{eq:OL_A3_mbm}
\end{align*}
Similarly, by Proposition~\ref{prop:4th_product_expectation}, \eqref{eq:centered_C_dist}, \eqref{eq:Cov_C_no_over},
\begin{align*}
& \E \left[\left(  \ds \sum_{p=0}^{a - 1}\ds \sum_{q\ne rp}^{r(p+1)-1} \left(\bar{C}^{(i)}_l(b) - \bar{C}^{(i)}\right)  \left(\bar{C}^{(j)}_l(b) - \bar{C}^{(j)}\right) \left(\bar{C}^{(i)}_l(b/r) - \bar{C}^{(i)}\right)  \left(\bar{C}^{(j)}_l(b/r) - \bar{C}^{(j)} \right) \right) \right]\\
& = ra(a-1) \E \left[ \left(\bar{C}^{(i)}_l(b) - \bar{C}^{(i)} \right) \left(\bar{C}^{(j)}_l(b) - \bar{C}^{(j)} \right) \left(\bar{C}^{(i)}_l(b/r) - \bar{C}^{(i)} \right) \left(\bar{C}^{(j)}_l(b/r) - \bar{C}^{(j)} \right) \right]\\
& =  ra(a-1) \left[ \dfrac{\Sigma_{ij}^2 + \Sigma_{ii} \Sigma_{jj} }{n^2} + \left( \dfrac{n-b}{bn} \right) \left(\dfrac{rn - b}{bn}  \right) \Sigma_{ij}^2 \right]\\ 
& = \dfrac{r(a-1)}{ab^2} \left[(a-1)(ra-1) \Sigma_{ij}^2 + (\Sigma_{ij}^2 + \Sigma_{ii}\Sigma_{jj} )  \right]\,. \numberthis \label{eq:NOL_A3_mbm}
\end{align*}
Using \eqref{eq:OL_A3_mbm} and \eqref{eq:NOL_A3_mbm} in $A_3$, we get
\begin{align*}
\E[A_3] & = \E \Bigg[ - \dfrac{2c_n}{(1-c_n)^2} \dfrac{b^2/r}{(a-1)(ra-1)}     \left(  \ds \sum_{l=0}^{a - 1} (\bar{C}^{(i)}_l(b) - \bar{C}^{(i)})(\bar{C}^{(j)}_l(b) - \bar{C}^{(j)}) \right) \\
& \qquad \times  \left( \ds \sum_{l=0}^{ra - 1} (\bar{C}^{(i)}_l(b/r) - \bar{C}^{(i)})(\bar{C}^{(j)}_l(b/r) - \bar{C}^{(j)})\right) \Bigg]\\
& = - \dfrac{2c_n}{(1-c_n)^2} \dfrac{b^2/r}{(a-1)(ra-1)}  \Bigg[ \dfrac{r}{ab^2} \left[ (a-1)^2 [\Sigma_{ij}^2 + \Sigma_{ii} \Sigma_{jj}]+ (a-1)(ra-1) \Sigma_{ij}^2  \right]\\ 
& \quad  + \dfrac{r(a-1)}{ab^2} \left[(a-1)(ra-1) \Sigma_{ij}^2 + (\Sigma_{ij}^2 + \Sigma_{ii}\Sigma_{jj} )  \right] \Bigg] \\ 
& = -\dfrac{2c_n}{(a-c_n)^2} \dfrac{1}{a(ra-1)} \left[(a-1) [\Sigma_{ij}^2 + \Sigma_{ii}\Sigma_{jj}] + ra\Sigma_{ij}^2 + (a-1)(ra-1) \Sigma_{ij}^2 + \Sigma_{ii} \Sigma{jj}  \right]\\
& = -\dfrac{2c_n}{(1-c_n)^2} \dfrac{ar\Sigma_{ij}^2 + \Sigma_{ii}{\Sigma_{jj} } }{(ra-1)}\,. \numberthis \label{eq:A3_final_mbm}
\end{align*}
Combining \eqref{eq:A1_final_bm}, \eqref{eq:A2_final_bm}, and \eqref{eq:A3_final_mbm} in \eqref{eq:4thmoment_first},
\begin{align*}
& \E[\tilde{\Sigma}_{L,L}^{2,ij}]\\ 
& = \dfrac{1}{(1-c_n)^2} \dfrac{ a(a-1)\Sigma_{ij}^2 + a \Sigma_{ii} \Sigma_{jj} }{(a-1)^2}   + \left(\dfrac{c_n}{1-c_n}\right)^2 \dfrac{ ra(ra-1)\Sigma_{ij}^2 + ra \Sigma_{ii} \Sigma_{jj} }{(ra-1)^2}\\ 
& \quad  -\dfrac{2c_n}{(1-c_n)^2} \dfrac{ar\Sigma_{ij}^2 + \Sigma_{ii}{\Sigma_{jj} } }{(ra-1)} + o\left( \dfrac{1}{a} \right)\\ 
& = \dfrac{(ra-1)^2 [ a(a-1)\Sigma_{ij}^2 + a \Sigma_{ii} \Sigma_{jj}]   + c_n^2(a-1)^2 [ra(ra-1)\Sigma_{ij}^2 + ra \Sigma_{ii} \Sigma_{jj}]}{(1-c_n)^2(a-1)^2(ra-1)^2}  + o\left(\dfrac{1}{a} \right)\\ 
& \quad - \dfrac{ 2c_n(a-1)^2(ra-1)[ar\Sigma_{ij}^2 + \Sigma_{ii}{\Sigma_{jj} }]  }{(1-c_n)^2(a-1)^2(ra-1)^2} \\ 
& = \dfrac{a^4r^2 (1-c_n)^2\Sigma_{ij}^2   + a^3\Sigma_{ii}\Sigma_{jj} (r^2 + c_n^2r - c_nr)  +a^3 \Sigma_{ij}^2 (-r^2 - c_n^2r + 2c_nr + 4r^2c_n - 2r^2c_n^2 - 2r) }{(1-c_n)^2(a-1)^2 (ra-1)^2}\\ 
& \quad + o\left(\dfrac{1}{a} \right)\,. \numberthis \label{eq:4th_moment_final}
\end{align*}
Moving on to $\E[\tilde{\Sigma}^{ij}_{L,L}]$ and using \eqref{eq:centered_C_dist},
\begin{align*}
\E[\tilde{\Sigma}^{ij}_{L,L}] & = \E\Bigg[ \dfrac{1}{1-c_n} \dfrac{b}{a - 1} \ds \sum_{l=0}^{a - 1} (\bar{C}^{(i)}_l(b) - \bar{C}^{(i)})(\bar{C}^{(j)}_l(b) - \bar{C}^{(j)})\\ 
& \quad  - \dfrac{c_n}{1-c_n} \dfrac{b/r}{ra - 1} \ds \sum_{l=0}^{ra - 1} (\bar{C}^{(i)}_l(b/r) - \bar{C}^{(i)})(\bar{C}^{(j)}_l(b/r) - \bar{C}^{(j)})  \Bigg]\\ 
& = \dfrac{1}{1-c_n} \dfrac{b}{a-1}a \left( \dfrac{n-b}{bn} \right)\Sigma_{ij} - \dfrac{c_n}{1-c_n} \dfrac{b/r}{ra - 1} ra \left(\dfrac{rn-b}{bn} \right) \Sigma_{ij} = \Sigma_{ij}\,. \numberthis \label{eq:expectation_final}
\end{align*}

Using \eqref{eq:4th_moment_final} and \eqref{eq:expectation_final} we get,
\begin{align*}
  & \Var(\tilde{\Sigma}^{ij}_{L,L}) \\
& = \E[\tilde{\Sigma}^{2,ij}_{L,L}] - (\E[\tilde{\Sigma}^{i,j}_{L,L}])^2\\ 
& = \dfrac{a^4r^2 (1-c_n)^2\Sigma_{ij}^2   + a^3\Sigma_{ii}\Sigma_{jj} (r^2 + c_n^2r - c_nr)  +a^3 \Sigma_{ij}^2 (4r^2c_n - 2r^2c_n^2 - 2r-r^2 - c_n^2r + 2c_nr) }{(1-c_n)^2(a-1)^2(ra-1)^2}\\ 
& \quad  - \Sigma_{ij}^2 + o\left( \dfrac{1}{a} \right)+ o\left( \dfrac{1}{a} \right)\\ 
&  = \dfrac{a^3  \Sigma_{ii}\Sigma_{jj} (r^2 + c_n^2r - 2rc_n)}{(1-c_n)^2 (a-1)^2 (ra-1)^2}\\ 
& \quad + \dfrac{ a^3 \Sigma_{ij}^2(4r^2c_n - 2r^2c_n^2 - 2r - r^2 - c_n^2r + 2c_nr + 2r + 2r^2 + 2rc_n^2 + 2r^2c_n^2 -4rc_n - 4r^2c_n)}{(1-c_n)^2 (a-1)^2 (ra-1)^2} \\ 
& = \dfrac{a^3 (\Sigma_{ij}^2 +  \Sigma_{ii}\Sigma_{jj}) (r^2 + c_n^2r - 2rc_n)}{(1-c_n)^2 (a-1)^2 (ra-1)^2} + o \left(\dfrac{1}{a} \right)\\ 
& = \dfrac{ (\Sigma_{ij}^2 +  \Sigma_{ii}\Sigma_{jj}) (r^2 + c_n^2r - 2rc_n)}{(1-c_n)^2 r^2a} + o \left(\dfrac{1}{a} \right)\\ 
& = \left[\dfrac{1}{r}  + \dfrac{r-1}{r(1-c_n)^2}\right] \dfrac{(\Sigma_{ij}^2 +  \Sigma_{ii}\Sigma_{jj})}{a} + o \left(\dfrac{1}{a} \right)\,.
\end{align*}
To show that this is equivalent to the variance of the lugsail BM estimator, define
\[
\eta\left(\hat{\Sigma}^{ij}_L, \tilde{\Sigma}^{ij}_L \right) = \Var \left( \hat{\Sigma}^{ij}_L - \Sigma_{ij} \tilde{\Sigma}^{ij}_L  \right) + 2\Sigma_{ij} \E \left[ \left(\tilde{\Sigma}^{ij}_L - \E \tilde{\Sigma}^{ij}_L  \right) \left( \hat{\Sigma}^{ij}_L - \Sigma_{ij} \tilde{\Sigma}^{ij}_L  \right) \right]\,.
\]
Using Cauchy-Schwarz inequality and Lemma~\ref{lem:tilde_hat_equi_moment}
\begin{align*}
\left| \eta\left(\hat{\Sigma}^{ij}_L, \tilde{\Sigma}^{ij}_L \right)\right|  & = \left| \Var \left( \hat{\Sigma}^{ij}_L - \Sigma_{ij} \tilde{\Sigma}^{ij}_L  \right) + 2\Sigma_{ij} \E \left[ \left(\tilde{\Sigma}^{ij}_L - \E \tilde{\Sigma}^{ij}_L  \right) \left( \hat{\Sigma}^{ij}_L - \Sigma_{ij} \tilde{\Sigma}^{ij}_L  \right) \right]  \right| \\ 
& \leq  \left| \E \left[\left( \hat{\Sigma}^{ij}_L - \Sigma_{ij} \tilde{\Sigma}^{ij}_L  \right)^2 \right] + 2\Sigma_{ij}  \left(\E \left[\left( \hat{\Sigma}^{ij}_L - \Sigma_{ij} \tilde{\Sigma}^{ij}_L  \right)^2 \right] \Var\left(\tilde{\Sigma}^{ij}_L\right)  \right)^{1/2}  \right| \\ 
& = o(1) + 2\Sigma_{ij}\left(o(1) \left(O\left(\dfrac{b}{n} \right) + o\left(\dfrac{b}{n} \right) \right)  \right)^{1/2} = o(1)\,.
\end{align*}
Next,
\begin{align*}
\Var \left(\hat{\Sigma}^{ij}_{L} \right) & = \E\left[ \left(\hat{\Sigma}^{ij}_{L} - \E \left[ \hat{\Sigma}^{ij}_{L}\right] \right)^2  \right] \\ 
& = \E\left[ \left( \hat{\Sigma}^{ij}_{L} \pm \Sigma_{ij} \tilde{\Sigma}^{ij}_L \pm \Sigma_{ij}\E \left[\tilde{\Sigma}^{ij}_L \right] - \E \left[ \hat{\Sigma}^{ij}_{L}\right] \right)^2\right] \\
& = \E\left[ \left(   \left(\hat{\Sigma}^{ij}_{L} -  \Sigma_{ij} \tilde{\Sigma}^{ij}_L \right) + \left(\Sigma_{ij} \tilde{\Sigma}^{ij}_L  - \Sigma_{ij} \E \left[\tilde{\Sigma}^{ij}_L \right] \right) -  \left(\E \left[ \hat{\Sigma}^{ij}_{L}\right] - \Sigma_{ij} \E \left[\tilde{\Sigma}^{ij}_L \right] \right)\right)^2\right] \\
& =  \Sigma_{ij}^2 \E \left[ \left( \tilde{\Sigma}^{ij}_L  - \E \left[\tilde{\Sigma}^{ij}_L \right] \right)^2 \right]+ \E \left[ \left(\left(\hat{\Sigma}^{ij}_{L} -  \Sigma_{ij} \tilde{\Sigma}^{ij}_L \right) -  \left(\E \left[ \hat{\Sigma}^{ij}_{L}\right] - \Sigma_{ij} \E \left[\tilde{\Sigma}^{ij}_L \right] \right)  \right)^2\right] \\ 
& \quad \quad +  2\Sigma_{ij} \E \left[ \left(\tilde{\Sigma}^{ij}_L - \E \tilde{\Sigma}^{ij}_L  \right) \left[ \left( \hat{\Sigma}^{ij}_L - \Sigma_{ij} \tilde{\Sigma}^{ij}_L  \right) - \E \left(\hat{\Sigma}^{ij}_L - \Sigma_{ij}\tilde{\Sigma}^{ij}_L \right) \right] \right] \\ 
& = \Sigma_{ij}^2 \Var \left( \tilde{\Sigma}^{ij}_L\right) + \eta\left(\hat{\Sigma}^{ij}_L, \tilde{\Sigma}^{ij}_L \right) + o(1)\\
& = {(\Sigma_{ij}^2 +  \Sigma_{ii}\Sigma_{jj})}\left[\dfrac{1}{r}  + \dfrac{r-1}{r(1-c_n)^2}\right]\dfrac{b}{n}  + o \left(\dfrac{b}{n} \right) + o(1)\,.
\end{align*}
\end{proof}

\bibliographystyle{apalike}
\bibliography{mcref}

\end{document}